\documentclass[aps,prx,reprint,superscriptaddress,nofootinbib,floatfix]{revtex4-2}

\usepackage[T1]{fontenc}
\usepackage{lmodern}
\usepackage{amsmath,amssymb,amsthm,mathtools,bm}
\usepackage{booktabs}
\usepackage{graphicx}
\usepackage{microtype}
\usepackage{xcolor}
\usepackage[colorlinks=true,citecolor=blue,linkcolor=blue,urlcolor=blue]{hyperref}
\hypersetup{
 pdftitle={Uniformly Stable Minimal Weyl--Heisenberg Measurements Approaching the SIC Benchmark},
 pdfauthor={Xiuwu Zhu and Yu Wang}
}

\newtheorem{theorem}{Theorem}
\newtheorem{proposition}[theorem]{Proposition}
\newtheorem{lemma}[theorem]{Lemma}
\newtheorem{corollary}[theorem]{Corollary}
\theoremstyle{definition}

\theoremstyle{remark}

\newcommand{\C}{\mathbb C}
\newcommand{\F}{\mathbb F}
\newcommand{\Z}{\mathbb Z}
\newcommand{\E}{\mathbb E}
\newcommand{\Prob}{\mathbb P}
\newcommand{\Tr}{\operatorname{Tr}}

\newcommand{\Spec}{\operatorname{Spec}}
\newcommand{\ket}[1]{\lvert #1\rangle}
\newcommand{\bra}[1]{\langle #1\rvert}

\newcommand{\ketbra}[2]{\lvert #1\rangle\!\langle #2\rvert}
\newcommand{\abs}[1]{\left\lvert #1\right\rvert}
\newcommand{\norm}[1]{\left\lVert #1\right\rVert}
\newcommand{\HS}{\mathrm{HS}}
\newcommand{\cM}{\mathcal M}
\newcommand{\cS}{\mathcal S}
\newcommand{\cH}{\mathcal H}

\begin{document}

\title{Uniformly Stable Minimal Weyl--Heisenberg Measurements Approaching the SIC Benchmark}

\author{Xiuwu Zhu}
\affiliation{Beijing Key Laboratory of Topological Statistics and Applications for Complex Systems, Beijing Institute of Mathematical Sciences and Applications, Beijing 101408, China}

\author{Yu Wang}
\email{ming-jing-happy@163.com}
\affiliation{Hetao Institute of Mathematics and Interdisciplinary Sciences, Shenzhen, Guangdong 518017, China}
\affiliation{Beijing Key Laboratory of Topological Statistics and Applications for Complex Systems, Beijing Institute of Mathematical Sciences and Applications, Beijing 101408, China}

\date{August 12, 2026}

\begin{abstract}
Informational completeness (IC) guarantees that an inverse exists, not that it is statistically well conditioned.  For minimal rank-one Weyl--Heisenberg (WH) measurements, covariance makes the nonidentity projector-Gram spectrum proportional to the fiducial's ambiguity intensities, with eigenvalues \(d|\chi_\phi(u)|^2\), turning stability into an explicit worst-direction design problem; write \(\lambda\) for its smallest nonidentity eigenvalue.  Haar fiducials are IC almost surely while \(\mathbb E[\lambda^{-1}]=\infty\), and an explicit geometric family used to establish balanced informationally complete measurements in every dimension has a normalized spectral floor bounded above by an exponentially decaying envelope.  We then construct a hierarchy of minimal measurements.  A cyclic family with exactly \(d^2\) outcomes in every integer dimension has floors \(\Theta(d^{-3})\) for odd \(d\) and \(\Theta(d^{-5})\) for even \(d\); a finite-field family for \(q=2^m\) obeys the uniform bound \(\lambda\ge4/9\).  Our main result treats every prime-power dimension of characteristic \(p\ge5\).  A balanced one-coordinate perturbation repairs the zero ambiguity axis of a cubic Alltop state, gives an attained floor uniformly bounded below by a positive constant, and confines the entire nonidentity spectrum to \([L_q,U_q]\) with \(U_q/L_q\to1\).  Its SIC-normalized minimum tends to one, and \(\lambda(\phi_q)/\Lambda_q^\star\to1\) for the global finite-field WH max--min optimum \(\Lambda_q^\star\), without assuming SIC existence.  The complete spectrum determines the exact finite-sample Hilbert--Schmidt error of canonical linear inversion at \(I/d\), while its lower edge controls local Fisher efficiency and canonical-shadow bounds.
\end{abstract}

\maketitle

\section{Introduction}
\label{sec:intro}
An informationally complete (IC) measurement uniquely determines a quantum state from its outcome statistics.  But uniqueness does not imply stability.  An IC measurement may resemble an invertible but nearly singular matrix: statistical errors or small model perturbations can be strongly amplified even though the inverse exists.  The central question of this work is quantitative: for minimal Weyl--Heisenberg measurements, how stable can explicit constructions be, and how closely can they approach the SIC max--min benchmark without assuming that an exact SIC exists?

Weyl--Heisenberg (WH) covariance makes this question explicitly computable.  The cyclic phase space \(\Z_d^2\) has labels \(u=(a,b)\), with \(D_u=X^aZ^b\) applying a shift and a phase modulation; overall displacement phases are immaterial.  For a normalized fiducial \(\ket\phi\), set
\begin{equation}
 \Pi_u=D_u\ketbra\phi\phi D_u^\dagger,
 \qquad
 G_\phi^\Pi[u,v]=\Tr(\Pi_u\Pi_v).
 \label{eq:intro-projector-Gram}
\end{equation}
Throughout the paper, a bare \(G_\phi\) means this projector Gram matrix unless an effect Gram matrix \(G_\phi^E\) is displayed explicitly.  Its smallest nonidentity eigenvalue is
\begin{equation}
 \lambda(\phi):=\lambda_{\min}\!\left(
 G_\phi^\Pi\big|_{\boldsymbol{1}^{\perp}}
 \right).
 \label{eq:intro-lambda}
\end{equation}
Here \(\boldsymbol{1}=(1,\ldots,1)^T\in\C^{d^2}\) is the constant phase-space vector, and \(\boldsymbol{1}^\perp\) corresponds to the traceless operator sector.  Phase-space characters diagonalize \(G_\phi^\Pi\), with eigenvalues \(d|\chi_\phi(u)|^2\) up to a symplectic relabeling \cite{Goldberger2022}.  Zeros therefore mark loss of IC, small coefficients mark weakly resolved operator directions, and a flat nonidentity spectrum is the SIC endpoint.  This criterion targets a different quantity from the pairwise-coherence upper bounds commonly used in approximate-SIC constructions \cite{Klappenecker2005,CaoDeng2024}: an upper bound on coherence alone need not lower-bound the weakest ambiguity coefficient, whereas our main construction controls the entire nonidentity spectrum.

We use the SIC-normalized score
\begin{equation}
 \eta(\phi)=\frac{d+1}{d}\lambda(\phi)
 =(d+1)\min_{u\ne0}\abs{\bra\phi D_u\ket\phi}^2.
 \label{eq:intro-eta}
\end{equation}
For a single orbit, \(\eta>0\) is equivalent to IC.  A dimension-indexed family is uniformly spectrally stable when \(\eta\) has a positive dimension-independent lower bound.  Figure~\ref{fig:story}(a) summarizes the hierarchy
\[
 \mathrm{SIC}\ \Longrightarrow\
 \text{uniform spectral stability}\ \Longrightarrow\
 \mathrm{IC},
\]
whose converses fail.  BIC is a structural property and does not lie on this stability hierarchy.

Operator-frame theory identifies the lower frame bound as the quantity controlling canonical inversion and reconstruction-error bounds.  The same edge governs canonical-estimator second moments in IC-POVM extensions of classical shadows \cite{Huang2020,Acharya2021,Innocenti2023} and, at the maximally mixed state, the least classical-to-quantum Fisher-information ratio \cite{SainiFisher2026}.  Completeness stability is maximized without structural restrictions by weighted complex projective \(2\)-designs \cite{Saini2026}; within a minimal equally weighted rank-one family, the symmetric endpoint is a SIC-POVM \cite{Renes2004,Scott2006}.  Exact SICs are known in many dimensions, but unconditional existence in every dimension remains open \cite{Horodecki2022,Bengtsson2025,ApplebyFlammiaKopp2025}.  Our max--min formulation therefore asks how closely explicit WH measurements can reach that endpoint without relying on SIC existence.

Balanced informationally complete (BIC) measurements are tight, minimal, equal-weight rank-one IC measurements and attain maximal ideal device-independent randomness in every dimension \cite{Farkas2026}.  Every complete full rank-one WH orbit is automatically BIC, yet BIC imposes no quantitative lower bound on its weakest operator-space direction.  Indeed, the explicit geometric WH family used to prove all-dimensional BIC existence has a SIC-normalized spectral floor bounded above by \(O(d\,2^{-d})\).  Haar fiducials provide a complementary separation: they are IC almost surely, and every fixed nonidentity Gram eigenvalue has the SIC mean, but \(\mathbb E[\lambda^{-1}]=\infty\).  Generic IC is not enough.

We next build stability in stages.  In every integer dimension, an explicit cyclic fiducial gives a minimal equal-weight rank-one IC POVM with exactly \(d^2\) outcomes and projector-Gram floors \(\Theta(d^{-3})\) for odd \(d\) and \(\Theta(d^{-5})\) for even \(d\).  A separate finite-field fiducial for every \(q=2^m\) raises the floor uniformly to at least \(4/9\), covering all multi-qubit Hilbert-space dimensions, although its spectrum is not asymptotically flat.

The strongest result begins with a cubic Alltop state over \(\F_q\).  Its ambiguity profile is flat except for one zero axis.  A single-coordinate perturbation repairs this zero set, and balancing the repaired-axis amplitude against the bulk distortion selects \(t_q\asymp q^{-1/4}\).  For every prime power of characteristic \(p\ge5\), the resulting nonidentity spectrum lies in \([L_q,U_q]\), where \(L_q\) is uniformly positive and \(U_q/L_q\to1\).  Its SIC-normalized minimum tends to one.  More strongly, if
\[
 \Lambda_q^\star=\max_{\norm\phi=1}\lambda(\phi)
\]
is the global finite-field WH max--min optimum, then \(\lambda(\phi_q)/\Lambda_q^\star\to1\).  This asymptotic optimality is unconditional on SIC existence and does not identify a finite-\(q\) global optimizer.

The projector-Gram spectrum also supplies the operational interpretation used below.  Its lower edge controls worst-direction inverse amplification, local Fisher efficiency at \(I/d\), and canonical-shadow second-moment bounds; the full spectrum determines the exact finite-sample Hilbert--Schmidt MSE of canonical linear inversion there.  The complete WH orbits can also instantiate the ideal BIC-based randomness protocol, although the spectral floor alone does not order its nonideal robustness.
\section{Spectral stability of minimal WH measurements}
\label{sec:framework}

\subsection{Exact spectral interface}

Let \(d\ge2\), \(\omega=e^{2\pi i/d}\), and
\begin{equation}
 X\ket x=\ket{x+1\bmod d},
 \qquad
 Z\ket x=\omega^x\ket x.
 \label{eq:cyclic-XZ}
\end{equation}
For \(u=(a,b)\in\Z_d^2\), set \(D_u=X^aZ^b\).  Overall displacement phases will never matter.  A normalized fiducial \(\ket\phi\in\C^d\) generates
\begin{equation}
 \Pi_u=D_u\ketbra\phi\phi D_u^\dagger,
 \qquad
 E_u=\frac1d\Pi_u.
 \label{eq:orbit}
\end{equation}
WH twirling gives \(\sum_u\Pi_u=dI\), so \(\{E_u\}\) is a rank-one POVM.  Define
\begin{equation}
 \chi_\phi(a,b)=\bra\phi X^aZ^b\ket\phi
 \label{eq:chi}
\end{equation}
and the ambiguity intensity
\begin{equation}
 g_\phi(a,b)=\abs{\chi_\phi(a,b)}^2.
 \label{eq:ambiguity-intensity}
\end{equation}

The \emph{projector} and \emph{effect} Gram matrices are, respectively,
\begin{align}
 G_\phi^\Pi[u,v]
 &=\Tr(\Pi_u\Pi_v)=g_\phi(v-u),
 \label{eq:projector-Gram}\\
 G_\phi^E[u,v]
 &=\Tr(E_uE_v)=\frac1{d^2}G_\phi^\Pi[u,v].
 \label{eq:effect-Gram}
\end{align}
Here \(u,v\in\Z_d^2\) label the \(d^2\) elements of the WH orbit, so
\(G_\phi^\Pi\) is a \(d^2\times d^2\) matrix whose rows and columns are
indexed by discrete phase-space points.  Since each entry depends only on
the phase-space difference \(v-u\), \(G_\phi^\Pi\) acts as a convolution
operator on \(\Z_d^2\).

As declared in the Introduction, bare \(G_\phi\) denotes \(G_\phi^\Pi\), and
\begin{equation}
 \lambda(\phi)
 :=\lambda_{\min}\!\left(
 G_\phi^\Pi\big|_{\boldsymbol{1}^{\perp}}
 \right)
 \label{eq:lambda-definition}
\end{equation}
is its smallest nonidentity eigenvalue, including zero when the orbit is incomplete.  Here \(\boldsymbol{1}\) is the constant phase-space character, and \(\boldsymbol{1}^{\perp}\) corresponds to the traceless operator sector.  The matrix \(G_\phi^\Pi\) is the Hilbert--Schmidt Gram matrix of the \(d^2\) projectors, rather than the ordinary state-vector Gram matrix of the \(d^2\) orbit states.  It records how the measurement operators overlap and therefore how evenly operator space is resolved.

For \(d^2\) projectors, informational completeness is equivalent to nonsingularity of their Hilbert--Schmidt Gram matrix.  Earlier covariant and dynamical-tomography work used related nonzero-characteristic-function and nonzero-determinant criteria \cite{DAriano2004,CaoDengWang2024}.  Finite Gabor operator-frame bounds identify the same lower and upper ambiguity scales \cite{BojarovskaFlinth2016}, and Goldberger \emph{et al.} explicitly diagonalized the rank-one-projector Gramian by the two-dimensional Fourier transform \cite{Goldberger2022}.  We restate that established eigensystem in our displacement convention and projector-Gram normalization.

WH covariance makes the projector Gram matrix depend only on phase-space differences, so it acts as a convolution operator.  Fourier characters are therefore its natural eigenmodes; equivalently, in the phase-space coefficient basis, the diagonalizing transform is the two-dimensional discrete Fourier transform \(F_d\otimes F_d\).

\begin{theorem}[WH projector-Gram spectrum]
\label{thm:WH-spectrum}
For $m,n\in\Z_d$, define the two-dimensional Fourier vector
\begin{equation}
 \ket{f_{m,n}}
 =
 \frac{1}{d}
 \sum_{a,b\in\Z_d}
 \omega^{ma+nb}\ket{a,b},
 \label{eq:phase-character}
\end{equation}
where $\{\ket{a,b}\}$ denotes the standard basis of the
phase-space coefficient space $\C^{d^2}$.
Equivalently,
\[
 \ket{f_{m,n}}
 =
 \ket{\widetilde m}\otimes\ket{\widetilde n},
 \qquad
 \ket{\widetilde m}
 =
 \frac{1}{\sqrt d}\sum_{a\in\Z_d}\omega^{ma}\ket a.
\]
Then $\ket{f_{m,n}}$ is an eigenvector of $G_\phi$ with
\begin{align}
 \lambda_{m,n}
 &=\sum_{a,b\in\Z_d}
 |\chi_\phi(a,b)|^2\omega^{ma+nb}
 \label{eq:spectrum-DFT}\\
 &=d|\chi_\phi(-n,m)|^2.
 \label{eq:spectrum-explicit}
\end{align}
The $d^2$ vectors $\{\ket{f_{m,n}}\}_{m,n\in\Z_d}$ form the
two-dimensional Fourier eigenbasis, equivalently the columns of
$F_d\otimes F_d$. Consequently,
\begin{equation}
 \Spec(G_\phi)
 =
 \{d|\chi_\phi(u)|^2:u\in\Z_d^2\},
 \label{eq:spectrum-multiset}
\end{equation}
including multiplicities.
\end{theorem}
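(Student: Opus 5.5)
The plan is to use the convolution structure in \eqref{eq:projector-Gram} to show that the Fourier vectors are eigenvectors and to read off the eigenvalue \eqref{eq:spectrum-DFT}. The explicit form \eqref{eq:spectrum-explicit} then follows from a Weyl--Heisenberg twirling identity. For the first step, write $u=(a,b)$ and the shift $w=v-u=(w_1,w_2)$. Then
\[
 (G_\phi f_{m,n})[u]=\sum_{v}g_\phi(v-u)\,\tfrac1d\omega^{m v_1+n v_2}
 =\tfrac1d\omega^{ma+nb}\sum_{w}g_\phi(w)\,\omega^{m w_1+n w_2}.
\]
So $\ket{f_{m,n}}$ is an eigenvector with eigenvalue $\lambda_{m,n}=\sum_{w}|\chi_\phi(w)|^2\omega^{mw_1+nw_2}$, which is \eqref{eq:spectrum-DFT}. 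The factorization $\ket{f_{m,n}}=\ket{\widetilde m}\otimes\ket{\widetilde n}$ is immediate. Orthonormality of the $d^2$ vectors follows from orthonormality of the columns of $F_d$, so they form an eigenbasis.

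For \eqref{eq:spectrum-explicit}, I will realize the character $\omega^{ma+nb}$ as a conjugation phase. From \eqref{eq:cyclic-XZ} one checks $ZX=\omega XZ$, which gives $Z^mX^aZ^bZ^{-m}=\omega^{ma}X^aZ^b$ and $X^{-n}X^aZ^bX^{n}=\omega^{nb}X^aZ^b$. Hence, with $V:=X^{-n}Z^m$, we get $VD_uV^\dagger=\omega^{ma+nb}D_u$. Taking adjoints gives $V^\dagger D_u^\dagger V=\omega^{ma+nb}D_u^\dagger$.

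Set $P=\ketbra\phi\phi$, so that $|\chi_\phi(u)|^2=\Tr(PD_uPD_u^\dagger)$. Substituting the conjugation relation,
\[
 \lambda_{m,n}=\sum_u\Tr\!\bigl(PD_uPD_u^\dagger\bigr)\omega^{ma+nb}
 =\sum_u\Tr\!\bigl(D_u(PV^\dagger)D_u^\dagger\,VP\bigr).
\]
Next apply the twirling identity $\sum_u D_uAD_u^\dagger=d\Tr(A)I$. This is the same fact that underlies $\sum_u\Pi_u=dI$, extended by linearity from rank-one $A$ to arbitrary $A$. It collapses the sum to
\[
 \lambda_{m,n}=d\,\Tr(PV^\dagger)\Tr(VP)=d\,|\bra\phi V\ket\phi|^2.
\]
Since $V=X^{-n}Z^m$ is exactly $D_{(-n,m)}$, we get $\lambda_{m,n}=d|\chi_\phi(-n,m)|^2$.

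Finally, $(m,n)\mapsto(-n,m)$ is a bijection of $\Z_d^2$. Combined with the eigenbasis statement, this gives the multiset identity \eqref{eq:spectrum-multiset} with multiplicities.

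The only delicate point is the sign and ordering bookkeeping. One must confirm that the commutation phases produce $\omega^{+(ma+nb)}$, rather than its conjugate, under the stated Fourier convention, so that the label comes out as $(-n,m)$ and not, say, $(n,-m)$. Because $|\chi_\phi(-u)|=|\chi_\phi(u)|$, an error there would not affect the multiset statement, only the explicit labeling. I would verify it by tracking $ZX=\omega XZ$ on basis vectors as above.
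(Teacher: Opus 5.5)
Your proof is correct. The first half (circulant diagonalization and orthonormality of the characters) is the same as the paper's; for the explicit eigenvalue you take a genuinely different route.

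\textbf{The paper's route.} It expands $\Pi_0=\frac1d\sum_{p,q}\overline{\chi_\phi(p,q)}D_{p,q}$ in the Weyl basis. It then uses $D_{a,b}D_{p,q}D_{a,b}^\dagger=\omega^{bp-aq}D_{p,q}$ to derive the symplectic self-Fourier identity \eqref{eq:ambiguity-self-Fourier} for $g_\phi$. Finally, character orthogonality in the sum over $(a,b)$ selects $(p,q)=(-n,m)$.

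\textbf{Your route.} You absorb the character into conjugation by $V=X^{-n}Z^m=D_{(-n,m)}$ and collapse the sum with the single twirl $\sum_uD_uAD_u^\dagger=d\Tr(A)I$. Your phases are right: $ZX=\omega XZ$ gives $VD_uV^\dagger=\omega^{ma+nb}D_u$, and hence $V^\dagger D_u^\dagger V=\omega^{ma+nb}D_u^\dagger$. The label therefore comes out as $(-n,m)$, not its negative. Your extension of the twirl from rank-one $A$ by linearity is valid, since rank-one projectors span $M_d(\C)$.

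\textbf{What each buys.} Your argument is shorter and makes the label bookkeeping automatic, because the label is literally the displacement that implements the character. It uses only the twirling (unitary error basis) property, so it transfers verbatim to the finite-field setting. The paper's route yields the self-Fourier identity for the ambiguity function as a structural by-product. Its Route B in the appendix also records $T_\phi f_{m,n}=\overline{\chi_\phi(-n,m)}D_{-n,m}$, i.e.\ that Weyl operators diagonalize the frame channel, which is used later. Your twirl gives exactly that identity if you stop before the final trace: $\sum_u\omega^{ma+nb}D_uPD_u^\dagger=d\Tr(PV^\dagger)\,V$.
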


\begin{proof}[Fourier/circulant proof]
For \(u=(u_1,u_2)\), substitute \(w=v-u\) in the convolution to obtain
\begin{align}
 (G_\phi f_{m,n})(u)
 &=\sum_v g_\phi(v-u)f_{m,n}(v)\notag\\
 &=f_{m,n}(u)\sum_{a,b}g_\phi(a,b)\omega^{ma+nb}.
 \label{eq:convolution-diagonalization}
\end{align}
Character orthogonality makes the \(f_{m,n}\) a complete orthonormal basis, proving Eq.~\eqref{eq:spectrum-DFT}.  To evaluate this Fourier transform, WH orthogonality and
\(
D_{a,b}D_{p,q}D_{a,b}^\dagger
=\omega^{bp-aq}D_{p,q}
\)
give the ambiguity identity
\begin{equation}
 g_\phi(a,b)=\frac1d\sum_{p,q}
 |\chi_\phi(p,q)|^2\omega^{bp-aq}.
 \label{eq:ambiguity-self-Fourier}
\end{equation}
Indeed, expand
\begin{align}
 \Pi_0
 &=\frac1d\sum_{p,q}\overline{\chi_\phi(p,q)}D_{p,q},
 \notag\\
 g_\phi(a,b)
 &=\Tr(\Pi_0D_{a,b}\Pi_0D_{a,b}^\dagger).
 \label{eq:projector-Weyl-expansion}
\end{align}
Fourier summation over \(a,b\) in Eq.~\eqref{eq:ambiguity-self-Fourier} selects \((p,q)=(-n,m)\), yielding Eq.~\eqref{eq:spectrum-explicit} with the stated signs and labels.
\end{proof}

The theorem gives a direct dictionary: \(\chi_\phi(u)=0\) marks a missing operator direction and loss of IC; small \(|\chi_\phi(u)|\) marks a weakly resolved direction and an unstable inverse; and a flat nonidentity spectrum means isotropic resolution at the SIC endpoint.

Appendix~\ref{app:spectrum} gives a complementary synthesis-operator proof, including the zero-eigenvalue case.  The Fourier route explains why phase-space characters are the eigendirections; the synthesis route explains why their Fourier eigenvalues collapse back to the ambiguity profile.  The immediate consequences are
\begin{align}
 \{E_u\}\text{ is IC}
 &\Longleftrightarrow \chi_\phi(u)\ne0
 \quad\text{for every }u\ne0,
 \label{eq:IC-iff}\\
 \lambda(\phi)
 &=d\min_{u\ne0}|\chi_\phi(u)|^2.
 \label{eq:lambda-min}
\end{align}
The identity direction has eigenvalue \(d\).  Weyl orthogonality gives the finite Moyal identity
\begin{equation}
 \sum_{u\in\Z_d^2}|\chi_\phi(u)|^2=d.
 \label{eq:Moyal}
\end{equation}
Since the Gram eigenvalues are \(d|\chi_\phi(u)|^2\), their total sum is \(d^2\).  The identity contribution is \(d|\chi_\phi(0)|^2=d\), so the remaining \(d^2-1\) nonidentity eigenvalues have the fixed sum
\begin{equation}
 \sum_{u\ne0} d|\chi_\phi(u)|^2
 =d^2-d.
 \label{eq:nonidentity-sum}
\end{equation}

\begin{figure*}[t]
 \centering
 \includegraphics[width=0.98\textwidth]{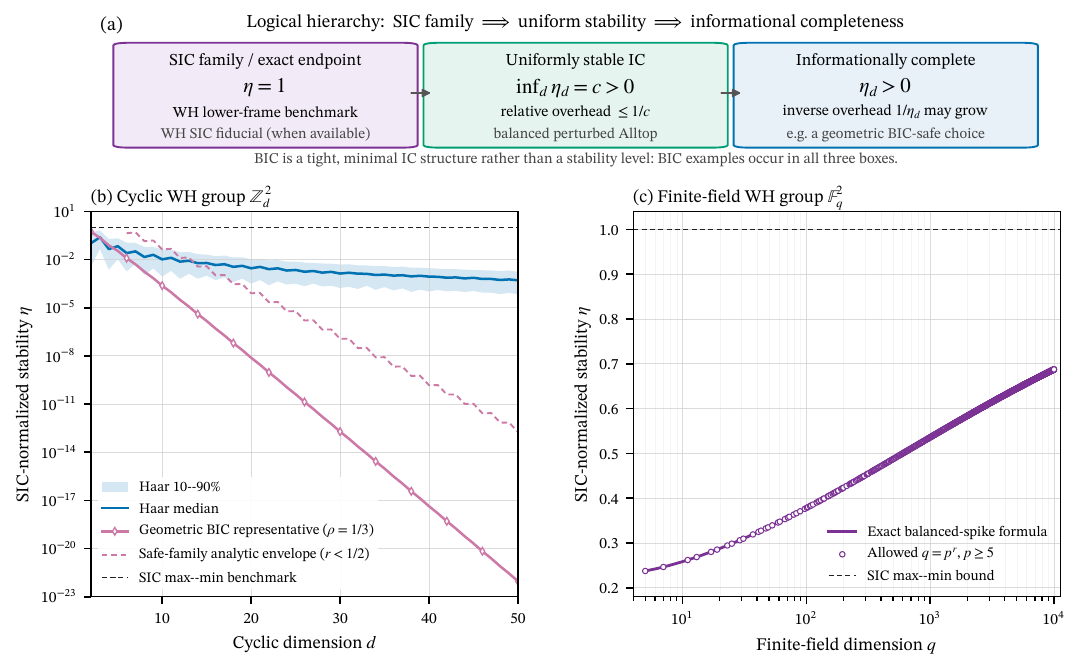}
 \caption{Spectral stability beyond binary informational completeness; larger \(\eta\) is better.  (a) A SIC is the exact \(\eta=1\) endpoint, uniform stability is a dimension-independent relative guarantee, and ordinary IC says only that the inverse exists, so conditioning may degrade.  BIC is a tight minimal-IC structure, not a stability level.  (b) Cyclic \(\Z_d^2\) benchmarks: blue shows the median and 10--90\% range of 2000 Haar fiducials per dimension; pink shows the geometric representative \(\alpha=(1/3)e^{2\pi i t_d}\) from the explicit sufficient IC region, with \(t_d=0\) for odd \(d\) and \(t_d=1/(4d)\) for even \(d\).  Its analytic upper envelope decays exponentially but does not bound all BICs.  (c) Exact balanced-Alltop stability for \(q=p^r\), \(p\ge5\).  This panel uses finite-field phase space \(\F_q^2\), distinct from cyclic \(\Z_q^2\) when \(r>1\).  The finite-\(q\) curve illustrates the proved convergence \(\eta\to1\), slow at the \(q^{-1/4}\) scale; the dashed line is the universal WH-SIC upper endpoint.}
 \label{fig:story}
\end{figure*}

\subsection{Stability scale and the SIC endpoint}

How large can the weakest resolved direction \(\lambda(\phi)\) be?  The fixed Moyal sum turns this stability question into a max--min problem.

\begin{proposition}[Pointwise max--min endpoint]
\label{prop:SIC-endpoint}
Every normalized WH fiducial satisfies
\begin{equation}
 \lambda(\phi)\le\frac d{d+1}.
 \label{eq:SIC-upper}
\end{equation}
Equality holds if and only if
\begin{equation}
 |\chi_\phi(u)|^2=\frac1{d+1}
 \quad\text{for all }u\ne0,
 \label{eq:SIC-condition}
\end{equation}
that is, if and only if its WH orbit is a SIC.
\end{proposition}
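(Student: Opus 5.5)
The plan is an averaging argument: by Eq.~\eqref{eq:lambda-min}, a minimum is at most an average, and by Eq.~\eqref{eq:nonidentity-sum} the average of the nonidentity eigenvalues is fixed. By Theorem~\ref{thm:WH-spectrum}, the \(d^2-1\) nonidentity eigenvalues of \(G_\phi\) are \(d|\chi_\phi(u)|^2\) for \(u\ne0\), and their sum is \(d^2-d\). Hence
\[
 \lambda(\phi)=d\min_{u\ne0}|\chi_\phi(u)|^2\le \frac{1}{d^2-1}\sum_{u\ne0}d|\chi_\phi(u)|^2=\frac{d(d-1)}{d^2-1}=\frac{d}{d+1},
\]
which is Eq.~\eqref{eq:SIC-upper}.

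For the equality case, I would use the fact that a minimum of nonnegative reals equals their mean if and only if all the terms are equal. So equality holds exactly when \(d|\chi_\phi(u)|^2=d/(d+1)\) for every \(u\ne0\), which is Eq.~\eqref{eq:SIC-condition}. Conversely, if Eq.~\eqref{eq:SIC-condition} holds, then every nonidentity eigenvalue equals \(d/(d+1)\), so \(\lambda(\phi)=d/(d+1)\).

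The last step identifies Eq.~\eqref{eq:SIC-condition} with the SIC property of the orbit. By Eq.~\eqref{eq:projector-Gram}, \(\Tr(\Pi_u\Pi_v)=g_\phi(v-u)=|\chi_\phi(v-u)|^2\). Since \(u\mapsto v-u\) ranges over all nonzero phase-space points as \(u\) ranges over \(u\ne v\), Eq.~\eqref{eq:SIC-condition} is equivalent to \(|\langle\phi_u|\phi_v\rangle|^2=1/(d+1)\) for all \(u\ne v\). Two points need care here. First, the \(d^2\) vectors must be pairwise distinct as rank-one projectors. This follows from Eq.~\eqref{eq:SIC-condition} itself: for \(u\ne v\), the overlap \(1/(d+1)<1\), so \(\Pi_u\ne\Pi_v\). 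Second, with this distinctness, \(d^2\) equiangular lines with squared overlap \(1/(d+1)\) are precisely a SIC, under the standard definition of a SIC-POVM as \(d^2\) projectors \(E_u=\Pi_u/d\) with this constant overlap.

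The only mild subtlety, rather than a real obstacle, is the phrase ``its WH orbit is a SIC''. If ``SIC'' is taken to mean any \(d^2\) equiangular rank-one projectors, the equivalence is immediate from the Gram identity above. The one point to check is that the equiangularity constant is forced to be \(1/(d+1)\) and cannot be another value. Eq.~\eqref{eq:Moyal} handles this: if all \(|\chi_\phi(u)|^2\) with \(u\ne0\) equal a common value \(c\), then \(1+(d^2-1)c=d\), so \(c=1/(d+1)\). This closes both directions.
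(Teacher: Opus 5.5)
Your proposal is correct and follows the paper's own argument: the minimum nonidentity eigenvalue cannot exceed the fixed Moyal average $d/(d+1)$, and equality holds exactly when the nonidentity spectrum is flat. You also check that this flat condition means the orbit consists of $d^2$ distinct equiangular projectors, with the constant forced to $1/(d+1)$ by Eq.~\eqref{eq:Moyal}; this spells out the step the paper calls immediate.
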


\begin{proof}
The total nonidentity spectral weight is fixed, so the largest possible minimum is obtained only when that weight is distributed uniformly across all nonidentity directions.  Quantitatively, the minimum cannot exceed the fixed average \(d/(d+1)\), and equality forces every nonidentity eigenvalue to equal that average.  Equation~\eqref{eq:spectrum-multiset} then gives Eq.~\eqref{eq:SIC-condition}; the converse is immediate.
\end{proof}

A WH SIC is therefore not inserted as an external symmetry target; it emerges as the max--min endpoint of stable minimal WH tomography.  This direct specialization is consistent with the general completeness-stability optimum of Ref.~\cite{Saini2026}.  Proposition~\ref{prop:SIC-endpoint} shows that only a SIC can saturate the bound; the balanced-Alltop family below supplies a non-SIC sequence that approaches it.  We henceforth use the normalized quantity in Eq.~\eqref{eq:intro-eta}, which lies in \([0,1]\).  For a family \(\{\phi_d:d\in\mathcal D\}\), we call
\begin{equation}
 \inf_{d\in\mathcal D}\eta(\phi_d)>0
 \label{eq:uniform-stability}
\end{equation}
\emph{uniform spectral stability}.  This is a property of a dimension-indexed family; ordinary IC is only the pointwise condition \(\eta(\phi_d)>0\).

Three normalizations occur in the literature.  Table~\ref{tab:normalization} records the same spectral stability in three normalizations, not three independent metrics.  Within this minimal rank-one class, a dimension-independent floor for \(G^\Pi\) corresponds to the SIC-scale \(O(d)\) inverse scaling of the physical frame channel, not to a constant channel gap.

\begin{table}[t]
 \caption{The same smallest nonidentity spectral value in three normalizations.  Here \(\lambda=\lambda(\phi)\) always refers to the projector Gram matrix \(G_\phi^\Pi\).}
\label{tab:normalization}
\begin{ruledtabular}
\begin{tabular}{lcc}
Object & Minimum & SIC value\\
\hline
Projector Gram \(G^\Pi\) & \(\lambda\) & \(d/(d+1)\)\\
Effect Gram \(G^E\) & \(\lambda/d^2\) & \(1/[d(d+1)]\)\\
Scaled frame \(\cM\) & \(\lambda/d\) & \(1/(d+1)\)
\end{tabular}
\end{ruledtabular}
\end{table}

Indeed, with
\begin{equation}
 \cS_\phi(A)=\sum_u\Tr(\Pi_uA)\Pi_u,
 \qquad
 \cM_\phi=\frac1d\cS_\phi,
 \label{eq:frame-channel}
\end{equation}
\(\cS_\phi\) is the unscaled frame superoperator and \(\cM_\phi\) is the scaled measurement frame channel.  We have \(\cM_\phi(D_u)=|\chi_\phi(u)|^2D_u\).  For the effects \(E_u=\Pi_u/d\), \(\cM_\phi\) is exactly the scaled frame \(F_s\) of Ref.~\cite{Saini2026}.  Denoting its smallest traceless-sector eigenvalue by \(s(\{E_u\})\),
\begin{equation}
 s(\{E_u\})=\frac{\lambda(\phi)}d,
 \qquad
 \eta(\phi)=(d+1)s(\{E_u\}).
 \label{eq:Saini-relation}
\end{equation}

\subsection{Canonical inversion, Fisher information, and shadows}
The spectrum has complementary operational roles: its smallest nonidentity
eigenvalue controls worst-direction amplification, while the full spectrum
determines the exact mean-squared error (MSE) of canonical linear inversion
at the maximally mixed state, measured in the Hilbert--Schmidt norm.
The same lower edge also fixes the least Fisher-information direction there
and universal second-moment bounds for canonical classical-shadow estimators.

When the orbit is IC, observing outcome \(u\) yields the canonical snapshot
\begin{equation}
 \widehat\rho_u=\cM_\phi^{-1}(\Pi_u),
 \label{eq:snapshot}
\end{equation}
which is unbiased because
\[
 \sum_u \Tr(E_u\rho)\,\Pi_u=\cM_\phi(\rho),
 \qquad
 \E_\rho[\widehat\rho_u]=\rho.
\]
Thus canonical inversion reconstructs the state by applying
\(\cM_\phi^{-1}\) to each observed outcome.

Although \(\cM_\phi\) acts on \(d\times d\) matrices, it is a linear
operator on the \(d^2\)-dimensional Hilbert--Schmidt operator space.
To relate its spectrum to the projector Gram spectrum, define the synthesis map
\[
 T:\C^{d^2}\to\mathcal L(\C^d),
 \qquad
 T\ket{u}=\lvert\Pi_u),
\]
where \(\ket{u}\) denotes the standard basis of the \(d^2\)-dimensional
coefficient space.  Then
\begin{equation}
 G_\phi^\Pi=T^\dagger T,
 \qquad
 d\,\cM_\phi=TT^\dagger.
 \label{eq:Gram-frame-relation}
\end{equation}
Indeed,
\[
 (T^\dagger T)_{u,v}
 =(\Pi_u\vert\Pi_v)
 =\Tr(\Pi_u\Pi_v),
 \quad
 TT^\dagger
 =\sum_u\lvert\Pi_u)(\Pi_u\rvert .
\]
The operators \(T^\dagger T\) and \(TT^\dagger\) have the same nonzero
eigenvalues.  Hence, if \(\lambda_v\) is a nonidentity eigenvalue of
\(G_\phi^\Pi\), the corresponding eigenvalue of \(\cM_\phi\) is
\(\lambda_v/d\).  After inversion this eigenvalue becomes \(d/\lambda_v\).
Therefore the weakest measurement direction is the most strongly amplified
direction:
\begin{equation}
 \norm{\cM_\phi^{-1}}_{2\to2}
 =
 \max_{v\ne0}\frac{d}{\lambda_v}
 =
 \frac{d}{\lambda(\phi)}.
 \label{eq:inverse-norm}
\end{equation}
Here
\begin{equation}
 \norm{\mathcal T}_{2\to2}
 =
 \sup_{\substack{A\ne0\\ \Tr A=0}}
 \frac{\norm{\mathcal T(A)}_{\HS}}
      {\norm{A}_{\HS}}
 \label{eq:induced-HS-norm}
\end{equation}
is the largest factor by which \(\mathcal T\) can amplify the
Hilbert--Schmidt size of a traceless operator.  Hence a small
\(\lambda(\phi)\) means that canonical inversion strongly amplifies errors
along at least one operator-space direction.

\begin{corollary}[Exact finite-sample canonical tomography error]
\label{cor:linear-MSE}
Let \(\overline\rho_N=N^{-1}\sum_{j=1}^N\widehat\rho_{u_j}\) be the mean of \(N\) independent canonical snapshots generated from \(\rho_*=I/d\).  If \(\{\lambda_v:v\ne0\}\) are the \(d^2-1\) nonidentity eigenvalues of \(G_\phi^\Pi\), then
\begin{equation}
 \E_{I/d}\norm{\overline\rho_N-I/d}_{\HS}^2
 =\frac1N\sum_{v\ne0}\frac1{\lambda_v}.
 \label{eq:linear-MSE}
\end{equation}
For a WH SIC this becomes
\begin{equation}
 \operatorname{MSE}_{\mathrm{SIC}}
 =\frac{(d^2-1)(d+1)}{Nd}.
 \label{eq:SIC-MSE}
\end{equation}
Consequently, the exact canonical MSE of the WH orbit, relative to the
WH-SIC benchmark, is
\begin{equation}
 R(\phi)
 =
 \frac{\sum_{v\ne0}\lambda_v^{-1}}
 {(d^2-1)(d+1)/d}.
 \label{eq:MSE-ratio}
\end{equation}
\end{corollary}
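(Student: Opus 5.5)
Since the snapshots are i.i.d.\ and unbiased, \(\E_{I/d}[\widehat\rho_u]=I/d\), so the finite-sample error reduces to a single-snapshot variance:
\[
\E\norm{\overline\rho_N-I/d}_{\HS}^2=\frac1N\,\E_{I/d}\norm{\widehat\rho_u-I/d}_{\HS}^2 .
\]
It therefore suffices to compute the single-snapshot variance at \(\rho_*=I/d\), where every outcome has probability \(\Tr(E_uI/d)=1/d^2\).

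For each outcome, split \(\Pi_u=I/d+\Pi_u^0\) with \(\Pi_u^0\) traceless.  Since \(\cM_\phi(I)=I\) (the identity eigenvalue of \(\cM_\phi\) is \(d/d=1\)) and \(\cM_\phi\) preserves the traceless sector, we get \(\widehat\rho_u-I/d=\cM_\phi^{-1}(\Pi_u^0)\).  The variance is then
\[
\frac1{d^2}\sum_u\norm{\cM_\phi^{-1}\Pi_u^0}_{\HS}^2
=\frac1{d^2}\Tr\!\Bigl(\cM_\phi^{-2}\textstyle\sum_u\lvert\Pi_u^0)(\Pi_u^0\rvert\Bigr),
\]
with the trace taken over the traceless operator sector.

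Next, \(\sum_u\lvert\Pi_u)(\Pi_u\rvert=d\cM_\phi\) by Eq.~\eqref{eq:Gram-frame-relation}.  The cross terms with \(I\) are confined to the identity sector, because \(\sum_u\Pi_u=dI\).  Hence, restricted to traceless operators, \(\sum_u\lvert\Pi_u^0)(\Pi_u^0\rvert=d\cM_\phi\).  The variance becomes
\[
\frac1{d^2}\Tr_0\bigl(d\,\cM_\phi^{-1}\bigr)=\frac1d\sum_{v\ne0}\frac{d}{\lambda_v}=\sum_{v\ne0}\frac1{\lambda_v}.
\]
This uses the correspondence stated after Eq.~\eqref{eq:Gram-frame-relation}: the nonidentity eigenvalues of \(\cM_\phi\) are \(\lambda_v/d\), in bijection with the nonidentity Gram eigenvalues via Theorem~\ref{thm:WH-spectrum}.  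Concretely, \(\cM_\phi(D_u)=|\chi_\phi(u)|^2D_u\), so the traceless eigenbasis is \(\{D_u/\sqrt d\}_{u\ne0}\) with eigenvalues \(\lambda_u/d\), where \(\lambda_u=d|\chi_\phi(u)|^2\).  Dividing by \(N\) gives Eq.~\eqref{eq:linear-MSE}.

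For the SIC case, Proposition~\ref{prop:SIC-endpoint} gives \(\lambda_v=d/(d+1)\) for all \(d^2-1\) nonidentity directions.  Summing yields \((d^2-1)(d+1)/d\), which is Eq.~\eqref{eq:SIC-MSE}.  The ratio Eq.~\eqref{eq:MSE-ratio} follows by division, since \(N\) cancels.

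The only delicate step is the bookkeeping of normalizations, i.e.\ the factors of \(d\) between \(\cM_\phi\), \(T T^\dagger\), and the outcome probabilities \(1/d^2\).  To guard against an error there, I would double-check the final formula by doing the computation directly in the orthonormal basis \(D_u/\sqrt d\).  In that basis, \(\Tr(\Pi_u^0 D_w^\dagger)\) has modulus \(|\chi_\phi(w)|\), and summing over \(u\) gives \(d^2|\chi_\phi(w)|^2\cdot(1/d)\) per direction \(w\).  This reproduces \(1/\lambda_w\) per direction, consistent with the result above.
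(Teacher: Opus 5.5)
Your proof is correct and is essentially the paper's argument: both rest on \(\sum_u\lvert\Pi_u)(\Pi_u\rvert=d\cM_\phi\), the uniform outcome probabilities \(1/d^2\) at \(I/d\), the trace of \(\cM_\phi^{-1}\), and independence to get the factor \(1/N\). The only difference is bookkeeping. You center each projector first, so the identity sector never appears. The paper instead computes the raw second moment \(\frac1d\Tr_{\HS}(\cM_\phi^{-1})=\frac1d+\sum_{v\ne0}\lambda_v^{-1}\) and uses unbiasedness to cancel the \(1/d\) against \(\norm{I/d}_{\HS}^2\).
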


\begin{proof}
At \(I/d\), all \(d^2\) outcomes have probability \(1/d^2\).  Using \(\sum_u|\Pi_u)(\Pi_u|=d\cM_\phi\), self-adjointness, and the identity-sector eigenvalue one of \(\cM_\phi\),
\begin{equation}
 \frac1{d^2}\sum_u\|\cM_\phi^{-1}(\Pi_u)\|_{\HS}^2
 =\frac1d\Tr_{\HS}(\cM_\phi^{-1})
 =\frac1d+\sum_{v\ne0}\frac1{\lambda_v}.
 \label{eq:MSE-trace-proof}
\end{equation}
Since the estimator is unbiased,
\[
 \E_{I/d}\norm{\widehat\rho_u-I/d}_{\HS}^2
 =
 \E_{I/d}\norm{\widehat\rho_u}_{\HS}^2
 -\norm{I/d}_{\HS}^2,
\]
and
\[
 \norm{I/d}_{\HS}^2=\frac1d.
\]
Therefore the identity-sector contribution \(1/d\) in
Eq.~\eqref{eq:MSE-trace-proof} cancels.  Finally, independence of the
\(N\) centered snapshots gives
\[
 \E_{I/d}\norm{\overline\rho_N-I/d}_{\HS}^2
 =
 \frac1N\sum_{v\ne0}\frac1{\lambda_v}.
\]
The SIC value follows from
\(\lambda_v=d/(d+1)\) for every \(v\ne0\).
\end{proof} 

This equality is scoped to the maximally mixed input, independent samples, canonical linear inversion, and Hilbert--Schmidt MSE; it is not a claim of arbitrary-state or estimator-optimal tomography.

Unlike \(\eta\), which probes the weakest resolved eigendirection, the canonical linear-inversion MSE depends on the full nonidentity Gram spectrum.  More generally, if the nonidentity projector-Gram spectrum lies in \([L,U]\), then
\begin{equation}
 \frac{d}{(d+1)U}\le R(\phi)
 \le\frac{d}{(d+1)L}.
 \label{eq:MSE-spectral-bounds}
\end{equation}
The spectral interval proved below for balanced Alltop makes both bounds tend to one.

\clearpage
\begin{figure}[t]
 \centering
 \includegraphics[width=\columnwidth]{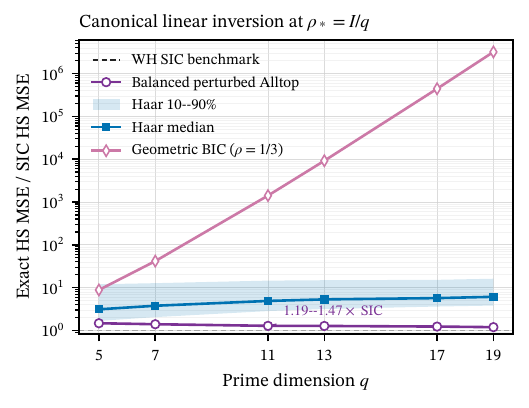}
 \caption{Exact Hilbert--Schmidt MSE ratio \(R\) for canonical linear inversion at \(I/q\) in prime dimensions.  The WH-SIC benchmark is \(R=1\), and larger \(R\) means larger error.  Purple circles show balanced Alltop, blue the median and 10--90\% range over 4000 Haar fiducials, and pink diamonds the geometric representative \(\alpha=1/3\) from the explicit sufficient IC region (all plotted dimensions are odd).  Ratios are evaluated analytically from the nonidentity projector-Gram spectrum and are independent of \(N\); no measurement-shot Monte Carlo is used.  For prime \(q\), cyclic and finite-field WH phase spaces coincide.}
 \label{fig:operational-MSE}
\end{figure}

Figure~\ref{fig:operational-MSE} uses the WH-SIC value \(R=1\) as its benchmark.  Across the displayed primes, the balanced-Alltop ratio decreases from \(1.474\) to \(1.194\), whereas the Haar median rises from \(3.117\) to \(6.108\) and the geometric representative grows from \(8.828\) to \(3.20\times10^6\).  Thus informational completeness alone does not control this canonical finite-sample linear-inversion MSE, even for the fixed maximally mixed input.

Fisher information quantifies how sensitively outcome probabilities respond to an infinitesimal change of the state in a specified parameter direction.  Consider a one-parameter local model around \(\rho_*=I/d\), with traceless Hermitian symmetric logarithmic derivative \(L\); here \(L\) specifies the local operator-space direction in which the state is varied.  The classical Fisher information \(I_C\) is obtained from this particular POVM, whereas \(I_Q\) is the quantum Fisher-information benchmark for the same local model.  The frame formulation of Ref.~\cite{SainiFisher2026} gives the Rayleigh quotient
\begin{equation}
 \frac{I_C}{I_Q}
 =\frac{\langle L,\cM_\phi(L)\rangle_{\HS}}
 {\langle L,L\rangle_{\HS}}.
\end{equation}
Thus \(I_C/I_Q\) is the fraction of locally available information retained by this measurement in direction \(L\).  Minimizing the Rayleigh quotient over nonzero Hermitian traceless \(L\) identifies the weakest resolved operator-space direction and gives
\begin{align}
 \min_{\substack{0\ne L=L^\dagger\\\Tr L=0}}
 \frac{I_C}{I_Q}
 &=\frac{\lambda(\phi)}d,
 \notag\\[-1mm]
 &=\frac{\eta(\phi)}{d+1}.
 \label{eq:Fisher-min}
\end{align}
Indeed, \(\cM_\phi\) preserves Hermiticity, and the Weyl eigenspaces at \(u\) and \(-u\) share the same eigenvalue; a nonzero Hermitian or anti-Hermitian component therefore attains the spectral minimum.
For a SIC this ratio is \(1/(d+1)\) in every traceless direction.  Thus \(\eta\) is exactly the worst-direction Fisher information relative to the WH SIC benchmark at \(\rho_*\), and \(\eta^{-1}\) is the relative worst-direction scalar Cram\'er--Rao-bound overhead there.  This equality is distinct from the state-uniform shadow bounds below.

Because \(\cM_\phi(I)=I\) and \(\cM_\phi\) is self-adjoint, it is trace preserving; so is its inverse.  Hence every canonical snapshot in Eq.~\eqref{eq:snapshot} has unit trace.

The classical-shadow paradigm \cite{Huang2020} extends to IC POVMs and general measurement frames \cite{Acharya2021,Innocenti2023}.  For the canonical IC-shadow estimator used here, the same spectral floor controls second moments and hence variance and sample-complexity bounds.

\begin{corollary}[Canonical-shadow spectral bounds]
\label{cor:shadow}
For a Hermitian observable \(O\), write \(O_0=O-(\Tr O)I/d\) and use the full single-shot estimator
\begin{equation}
 \widehat O_u=\frac{\Tr O}{d}
 +\Tr(O_0\widehat\rho_u).
 \label{eq:full-estimator}
\end{equation}
Then the random traceless contribution \(\widehat o_u=\Tr(O_0\widehat\rho_u)\) satisfies
\begin{align}
 \sup_\rho\E_\rho[\widehat o_u^2]
 &\le\frac d{\lambda(\phi)}\Tr(O_0^2),
 \label{eq:shadow-worst}\\
 \E_{I/d}[\widehat o_u^2]
 &=\frac1d\langle O_0,\cM_\phi^{-1}(O_0)\rangle_{\HS}
 \le\frac1{\lambda(\phi)}\Tr(O_0^2).
 \label{eq:shadow-mixed}
\end{align}
If the nonidentity projector-Gram spectrum lies in \([L,U]\), the exact second moment in Eq.~\eqref{eq:shadow-mixed} lies between \(\Tr(O_0^2)/U\) and \(\Tr(O_0^2)/L\).
\end{corollary}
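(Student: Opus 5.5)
We plan to work entirely in the eigenbasis of \(\cM_\phi\) supplied by the Weyl operators.  By Eq.~\eqref{eq:Gram-frame-relation} and Theorem~\ref{thm:WH-spectrum}, \(\cM_\phi(D_u)=|\chi_\phi(u)|^2D_u\), so on the traceless sector \(\cM_\phi^{-1}\) is self-adjoint with eigenvalues \(d/\lambda_v\ge\) nothing smaller than \(d/U\) and nothing larger than \(d/\lambda(\phi)\) (or \(d/L\)).  Since \(\cM_\phi\) is self-adjoint and trace preserving, \(\widehat o_u=\Tr(O_0\cM_\phi^{-1}(\Pi_u))=\langle \cM_\phi^{-1}(O_0),\Pi_u\rangle_{\HS}\); set \(B=\cM_\phi^{-1}(O_0)\), which is Hermitian (the map preserves Hermiticity) and traceless.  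Then \(\widehat o_u=\Tr(B\Pi_u)\) is real and
\[
 \E_\rho[\widehat o_u^2]=\sum_u\frac1d\Tr(\Pi_u\rho)\,\Tr(B\Pi_u)^2 .
\]

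For the worst-case bound \eqref{eq:shadow-worst}, we bound \(\Tr(B\Pi_u)^2=\bra{\phi_u}B\ket{\phi_u}^2\le\bra{\phi_u}B^2\ket{\phi_u}=\Tr(B^2\Pi_u)\) by Cauchy--Schwarz with \(\Pi_u\) a rank-one projector, and then use \(\Tr(\Pi_u\rho)\le 1\) to get \(\E_\rho[\widehat o_u^2]\le\frac1d\sum_u\Tr(B^2\Pi_u)=\Tr(B^2)\) via \(\sum_u\Pi_u=dI\).  Finally \(\Tr(B^2)=\norm{\cM_\phi^{-1}(O_0)}_{\HS}^2\le\norm{\cM_\phi^{-1}}_{2\to2}^2\Tr(O_0^2)\), which by Eq.~\eqref{eq:inverse-norm} gives \((d/\lambda)^2\Tr(O_0^2)\).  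This loses a factor of \(d/\lambda\) compared with the claim, so the crude bound \(\Tr(\Pi_u\rho)\le1\) is too weak; this is the step I expect to be the main obstacle.  I would instead keep the sum as \(\frac1d\sum_u \Tr(\Pi_u\rho)\Tr(B\Pi_u)^2\le \frac1d\max_u\Tr(B\Pi_u)^2\cdot\sum_u\Tr(\Pi_u\rho)=\max_u\Tr(B\Pi_u)^2\cdot d\cdot\frac1d\)... and more carefully use \(\sum_u\Tr(B\Pi_u)^2=\langle B,d\cM_\phi(B)\rangle_{\HS}=d\langle O_0,\cM_\phi^{-1}(O_0)\rangle_{\HS}\le \frac{d^2}{\lambda}\Tr(O_0^2)\), so that \(\max_u\Tr(B\Pi_u)^2\le \frac{d^2}{\lambda}\Tr(O_0^2)\) and \(\E_\rho\le \frac1d\cdot\frac{d^2}{\lambda}\Tr(O_0^2)\cdot\max_u\Tr(\Pi_u\rho)\le \frac d\lambda\Tr(O_0^2)\), using \(\sum_u\) replaced by the single worst term weighted by total probability \(\sum_u\Tr(\Pi_u\rho)/d=1\).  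That yields exactly \eqref{eq:shadow-worst}.

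For the maximally mixed input, \(\Tr(\Pi_u\rho)/d=1/d^2\), so \(\E_{I/d}[\widehat o_u^2]=\frac1{d^2}\sum_u\Tr(B\Pi_u)^2=\frac1{d^2}\langle B,d\cM_\phi(B)\rangle_{\HS}=\frac1d\langle O_0,\cM_\phi^{-1}(O_0)\rangle_{\HS}\), which is the equality in \eqref{eq:shadow-mixed}.  Expanding \(O_0\) in the Weyl eigenbasis (only \(u\ne0\) components, since \(O_0\) is traceless), the quadratic form equals \(\sum_{u\ne0}(d/\lambda_u)\,|c_u|^2\) with \(\sum|c_u|^2=\Tr(O_0^2)\) after normalization, so it lies between \(d\Tr(O_0^2)/U\) and \(d\Tr(O_0^2)/L\); dividing by \(d\) gives the spectral interval statement, and taking \(L=\lambda(\phi)\) gives the upper bound \(\Tr(O_0^2)/\lambda(\phi)\).
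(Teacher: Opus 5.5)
Your overall route is the paper's (Appendix~\ref{app:shadow}), and the mixed-input equality and the \([L,U]\) bracket are correct. The paper's steps are: set \(B=\cM_\phi^{-1}(O_0)\) and write \(\widehat o_u=\Tr(B\Pi_u)\). Then use \(\sum_u\Tr(B\Pi_u)^2=\langle B,\cS_\phi(B)\rangle_{\HS}=d\langle O_0,\cM_\phi^{-1}(O_0)\rangle_{\HS}\le d^2\Tr(O_0^2)/\lambda(\phi)\). Equality at \(I/d\) follows from \(p_u=1/d^2\), and the \([L,U]\) statement from expanding \(O_0\) in the Weyl eigenbasis.

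The worst-case step, however, is not justified as you wrote it. You first bound \(\E_\rho[\widehat o_u^2]\) by \(\max_u\Tr(B\Pi_u)^2\) against the total probability \(\frac1d\sum_u\Tr(\Pi_u\rho)=1\), and then insert \(\max_u\Tr(B\Pi_u)^2\le d^2\Tr(O_0^2)/\lambda\). That pairing only gives \(\E_\rho[\widehat o_u^2]\le d^2\Tr(O_0^2)/\lambda\), which is a factor \(d\) too weak. The inequality you finally display, with the extra factor \(\frac1d\max_u\Tr(\Pi_u\rho)\), is true, but not for the reason you give.

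The factor-\(d\) saving comes from the opposite H\"older pairing: put the maximum on the probability weights and keep the full sum of squares,
\[
 \E_\rho[\widehat o_u^2]=\frac1d\sum_u\Tr(\Pi_u\rho)\Tr(B\Pi_u)^2
 \le\frac1d\Bigl(\max_u\Tr(\Pi_u\rho)\Bigr)\sum_u\Tr(B\Pi_u)^2
 \le\frac1d\cdot\frac{d^2}{\lambda(\phi)}\Tr(O_0^2).
\]
The last step uses \(\Tr(\Pi_u\rho)\le1\), i.e.\ \(p_u\le1/d\). This is exactly the paper's argument. If you delete the Cauchy--Schwarz attempt and the \(\max_u\Tr(B\Pi_u)^2\) detour, your proof coincides with it.
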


The proof is in Appendix~\ref{app:shadow}.  The inverse channel is self-adjoint and Hermiticity-preserving, but need not be a positive map.  Equations~\eqref{eq:shadow-worst} and \eqref{eq:shadow-mixed} are uniform spectral bounds, not exact observable-by-observable variance formulas.  For this minimal \(d^2\)-outcome IC frame, the unbiased linear dual is unique.  Relative to the corresponding SIC spectral benchmark, the bound overhead is
\begin{equation}
 \frac{d/\lambda(\phi)}{d+1}=\eta(\phi)^{-1}.
 \label{eq:shadow-overhead}
\end{equation}
Unbiasedness gives \(\E_\rho\widehat O_u=\Tr(O\rho)\), and \(\operatorname{Var}_\rho(\widehat O_u)\le\E_\rho[\widehat o_u^2]\).  Hence a median-of-means estimate of \(K\) observables from independent outcomes achieves additive error \(\epsilon\) for all of them with failure probability at most \(\delta\) using the universal bound
\begin{equation}
 N=O\!\left[
 \frac{d+1}{\eta(\phi)\epsilon^2}
 \max_{1\le j\le K}\Tr(O_{j,0}^2)
 \log\!\frac K\delta
 \right].
 \label{eq:shadow-samples}
\end{equation}
The full spectrum therefore determines the exact finite-sample canonical MSE at \(I/d\), whereas its lower edge controls inverse amplification, worst-direction Fisher efficiency, and universal canonical-shadow bounds.

\section{Completeness without stability}
\label{sec:separation}

\subsection{Generic does not mean stable}

The IC condition is generic, but the inverse problem can still have a
heavy lower spectral tail.

\begin{theorem}[Haar-generic IC and divergent inverse stability]
\label{thm:Haar}
Let \(\phi\) be Haar distributed on the unit sphere of \(\C^d\).  Then
\begin{equation}
 \Prob[\chi_\phi(u)\ne0\text{ for every }u\ne0]=1.
 \label{eq:Haar-IC}
\end{equation}
For each fixed nonidentity displacement \(u\),
\begin{equation}
 \E_\phi|\chi_\phi(u)|^2=\frac1{d+1}.
 \label{eq:Haar-coordinate}
\end{equation}
Equivalently, by the spectral correspondence
in Eq.~\eqref{eq:spectrum-multiset}, each fixed nonidentity
projector-Gram eigenvalue has mean
\begin{equation}
 \E_\phi[\lambda_v]=\frac{d}{d+1},
 \label{eq:Haar-eigenvalue-mean}
\end{equation}
which is exactly the WH-SIC nonidentity eigenvalue.  Nevertheless,
\begin{equation}
 \E_\phi[\lambda(\phi)^{-1}]=\infty.
 \label{eq:Haar-divergence}
\end{equation}
\end{theorem}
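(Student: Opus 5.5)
The proof has three parts: the almost-sure IC statement, the per-coordinate mean, and the divergence of \(\E[\lambda^{-1}]\).

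\textbf{Almost-sure IC.} Fix \(u\ne0\). The map \(\phi\mapsto\chi_\phi(u)=\bra\phi D_u\ket\phi\) is a real-analytic function on the sphere. Its real and imaginary parts are real quadratic forms in \((\mathrm{Re}\,\phi,\mathrm{Im}\,\phi)\), so its zero set is a real algebraic subset of \(S^{2d-1}\). This function is not identically zero: \(D_u\ne0\) is normal, so we may take \(\phi\) to be a unit eigenvector of \(D_u\), and then \(|\chi_\phi(u)|=1\). A proper algebraic subset of the sphere has surface measure zero. Taking a finite union over the \(d^2-1\) displacements \(u\ne0\) gives Eq.~\eqref{eq:Haar-IC}.

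\textbf{Mean value.} I would use the Haar second moment
\[
\E\big[(\ketbra\phi\phi)^{\otimes2}\big]=\frac{I+F}{d(d+1)},
\]
where \(F\) is the swap. Then
\[
\E|\chi_\phi(u)|^2=\E\Tr\big[(\ketbra\phi\phi)^{\otimes2}(D_u\otimes D_u^\dagger)\big]
=\frac{|\Tr D_u|^2+\Tr(D_uD_u^\dagger)}{d(d+1)}.
\]
For \(u\ne0\) we have \(\Tr D_u=0\) and \(\Tr(D_uD_u^\dagger)=d\), so the mean is \(1/(d+1)\). Multiplying by \(d\) and invoking Theorem~\ref{thm:WH-spectrum} gives Eq.~\eqref{eq:Haar-eigenvalue-mean}.

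\textbf{Divergence.} This is the main step. By Eq.~\eqref{eq:lambda-min}, \(\lambda(\phi)^{-1}\ge d^{-1}|\chi_\phi(u_0)|^{-2}\) for any fixed \(u_0\ne0\). It therefore suffices to show \(\E|\chi_\phi(u_0)|^{-2}=\infty\) for a single \(u_0\). I would take \(u_0=(0,1)\), so that \(D_{u_0}=Z\) and
\[
\chi_\phi(0,1)=\sum_x|\phi_x|^2\omega^x .
\]
The vector \(p=(|\phi_x|^2)_x\) of a Haar state is uniformly distributed on the probability simplex \(\Delta_{d-1}\), with constant density. Hence \(\chi=\sum_x p_x\omega^x\) is the image of a uniform point of the simplex under a real-linear map \(\ell:\R^{d-1}\to\C\cong\R^2\). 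This map has rank 2 for \(d\ge3\). (For \(d=2\) the values \(\omega^x=\pm1\) are real, and \(\chi=p_0-p_1\) is uniform on \([-1,1]\).)

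The key fact is that the origin lies in the interior of the image. For \(d\ge3\), the point \(p=\boldsymbol 1/d\) is an interior point of the simplex and satisfies \(\chi=0\). Since \(\ell\) is a surjective linear map, it is open, so it sends a ball \(B\) around \(\boldsymbol 1/d\) onto a neighbourhood of \(0\). The pushforward of the uniform measure has a density on \(\R^2\), and this density is bounded below near \(0\) by \(c>0\): each fibre meets \(B\) in a set of \((d-3)\)-dimensional volume bounded below, by coarea. Consequently
\[
\E|\chi|^{-2}\ \ge\ c\int_{|z|<r}|z|^{-2}\,d^2z\ =\ \infty,
\]
since \(\int_0^r\rho^{-1}\,d\rho\) diverges logarithmically.

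For \(d=2\) the same conclusion holds directly: \(\E|\chi|^{-2}=\tfrac12\int_{-1}^1 t^{-2}\,dt=\infty\). In fact here even \(\E|\chi|^{-1}\) diverges.

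The only delicate point is justifying the lower bound on the pushforward density near \(0\). If the coarea argument is inconvenient, I would give a cruder estimate. Restrict to \(p\) in a small ball around \(\boldsymbol 1/d\); on this ball \(\ell\) acts as a linear projection onto two coordinates. Fubini then gives
\[
\Prob(|\chi|<\varepsilon)\ \ge\ c\,\varepsilon^2 .
\]
Dyadic summation of \(\E|\chi|^{-2}\ge\sum_k 2^{2k}\,\Prob\big(2^{-k-1}\le|\chi|<2^{-k}\big)\), where each dyadic shell contributes a constant, then yields divergence.
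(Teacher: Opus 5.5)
Your proof is correct and follows the paper's argument. Almost-sure IC uses the same nonvanishing-polynomial and finite-union step, and the mean uses the same Haar second-moment identity. For the divergence, you make the same reduction $\lambda(\phi)\le d|\chi_\phi(u_0)|^2$ and then use the uniform-simplex, regular-polygon and coarea density bound that the paper gives for odd $d$ (Appendix~\ref{app:Haar}).

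The one difference is that you run the polygon argument with $D=Z$ for every $d\ge3$, which is valid because $\sum_x\omega^x=0$ and the map still has rank two for even $d\ge4$. The paper instead handles even $d$ through the real variable $\bra\phi Z^{d/2}\ket\phi=2B-1$ with $B\sim\operatorname{Beta}(d/2,d/2)$. Your version removes the parity split, while the paper's even case is one-dimensional and exhibits an even heavier tail.
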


\begin{proof}[Proof sketch]
For fixed \(u\ne0\), the real polynomial
\(|\bra\phi D_u\ket\phi|^2\) is not identically zero, so its zero set
has Haar measure zero.  Since there are only finitely many nonidentity
displacements, a finite union proves Eq.~\eqref{eq:Haar-IC}.
This recovers, in the present Haar-measure formulation, the generic
informational-completeness result for finite Gabor POVMs in
Ref.~\cite{Goldberger2022}. 

The Haar second-moment identity
\begin{equation}
 \E_\phi|\bra\phi A\ket\phi|^2
 =
 \frac{\Tr(AA^\dagger)+|\Tr A|^2}{d(d+1)}
 \label{eq:Haar-second}
\end{equation}
gives Eq.~\eqref{eq:Haar-coordinate} for every nonidentity displacement,
since \(D_u\) is unitary and traceless for \(u\ne0\).
The spectral identity
\(\lambda_v=d|\chi_\phi(u)|^2\), up to the symplectic relabeling in
Eq.~\eqref{eq:spectrum-explicit}, then gives
Eq.~\eqref{eq:Haar-eigenvalue-mean}.

For the divergence, write \(p_j=|\phi_j|^2\), which is uniform on the
probability simplex.  If \(d\) is even, set
\begin{align*}
 B&=\sum_{j\,\mathrm{even}}p_j
 \sim\operatorname{Beta}(d/2,d/2),\\
 Y&=2B-1=\bra\phi Z^{d/2}\ket\phi.
\end{align*}
The density of \(B\) is continuous and strictly positive at \(B=1/2\),
so the density of \(Y\) is positive at \(Y=0\), and therefore
\(\E|Y|^{-2}=\infty\).

If \(d\) is odd,
\[
 \bra\phi Z\ket\phi=\sum_jp_j\omega^j
\]
is the linear image of the simplex onto the regular \(d\)-gon.  Its
two-dimensional density is bounded below near the interior point zero,
and
\[
 \int_0^\epsilon r^{-2}r\,dr=\infty.
\]
In both cases,
\[
 \lambda(\phi)
 \le d|\bra\phi D\ket\phi|^2
\]
for the displacement \(D\) used above, which proves
Eq.~\eqref{eq:Haar-divergence}.  Appendix~\ref{app:Haar} supplies the
constant-rank/coarea details for odd \(d\).
\end{proof}

Equations~\eqref{eq:Haar-coordinate} and
\eqref{eq:Haar-eigenvalue-mean} describe a fixed spectral direction:
on average, its Gram eigenvalue sits exactly at the SIC value.
Stability, however, is controlled by the minimum over all nonidentity
directions.  Haar fiducials are IC almost surely, yet rare realizations
with a near-zero minimum eigenvalue make
\(\lambda(\phi)^{-1}\) have infinite expectation.  Thus generic
completeness, and even SIC-level mean behavior in every fixed direction,
do not provide a finite-mean guarantee for worst-direction conditioning.
Accordingly, the Haar stability benchmark below is summarized by
quantiles of \(\eta\), rather than by the mean of
\(\lambda(\phi)^{-1}\). 

\subsection{BIC is not a stability level}

A BIC-POVM is a general notion and does not require group covariance.
In dimension \(d\), it has effects \(P_j/d\), where the \(P_j\) are
\(d^2\) rank-one projectors forming a basis of \(M_d(\C)\) and satisfying
\(\sum_j P_j=dI\) \cite{Farkas2026}.  Ref.~\cite{Farkas2026} proves
that such measurements exist in every dimension and, for this purpose,
gives an explicit Weyl--Heisenberg-covariant construction.  For a
complete rank-one WH orbit, covariance already implies
\(\sum_u P_u=dI\).  Hence within the WH subclass considered here,
\begin{equation}
 \text{WH-BIC}\quad\Longleftrightarrow\quad\text{WH-IC}.
 \label{eq:BIC-IC}
\end{equation}
Thus BIC fixes a structural completeness property, whereas spectral
stability asks how far the corresponding Gram spectrum stays from
singularity.  In particular, WH-BIC only requires the nonidentity
projector-Gram eigenvalues to be nonzero; it does not impose a
dimension-independent lower bound on them.

The all-dimensional WH construction used in
Ref.~\cite{Farkas2026} is based on the truncated geometric fiducial
\begin{equation}
 \ket{\psi_\alpha}
 =
 c_d\sum_{j=0}^{d-1}\alpha^j\ket j,
 \qquad
 c_d=
 \sqrt{\frac{1-|\alpha|^2}{1-|\alpha|^{2d}}},
 \quad
 0<|\alpha|<1.
 \label{eq:geometric-state}
\end{equation}
This ansatz goes back to D'Ariano \emph{et al.}, together with a WH
nonzero-coefficient criterion
\cite[Eqs.~(27) and (28)]{DAriano2004}.
Farkas \emph{et al.} observed that the original IC assertion requires
an amendment and supplied an explicit sufficient parameter region
\cite{Farkas2026}.  The following proposition sharpens this picture by
giving the complete phase classification and, at the same time,
quantifying the resulting lower spectral scale.

\begin{proposition}[Geometric zeros and spectral decay]
\label{prop:geometric}
Write
\[
 \alpha=\rho e^{i\theta},
 \qquad
 0<\rho<1.
\]
The orbit of \(\psi_\alpha\) is IC in every odd dimension.  In even
dimension it is IC if and only if
\begin{equation}
 d\theta\notin\pi\Z.
 \label{eq:geometric-criterion}
\end{equation}
In even dimension with \(d\theta\in\pi\Z\), all zeros have translation label \(a=d/2\),
their modulation labels obey
\begin{equation}
 (-1)^b=-e^{-id\theta},
 \label{eq:geometric-parity}
\end{equation}
and there are exactly \(d/2\) zero projector-Gram eigenvalues.  In
every dimension,
\begin{equation}
 \eta(\psi_\alpha)
 \le
 \frac{4(d+1)\rho^{2\lfloor d/2\rfloor}}
 {(1-\rho^{2d})^2}.
 \label{eq:geometric-bound}
\end{equation}
\end{proposition}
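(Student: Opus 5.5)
The plan is to compute the ambiguity function of $\psi_\alpha$ in closed form and then read off both the zero set and the spectral bound from it. By Eq.~\eqref{eq:IC-iff} and Eq.~\eqref{eq:spectrum-multiset}, IC and the count of zero projector-Gram eigenvalues are governed entirely by the zeros of $\chi_{\psi_\alpha}(a,b)$ for $(a,b)\ne0$. The bound \eqref{eq:geometric-bound} only requires exhibiting one nonidentity $u$ with small $|\chi|^2$, since $\eta=(d+1)\min_{u\ne0}|\chi(u)|^2$.

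\emph{Closed form.} Write $\psi_j=c_d\alpha^j$ and $z=\rho^2\omega^b$, so that $z^d=\rho^{2d}$ and $z\ne1$. Then
\[
\chi(a,b)=\sum_j\omega^{bj}\psi_j\overline{\psi_{j+a}} .
\]
I will split the sum at $j=d-a$, where the index $j+a$ wraps around. Both pieces are geometric series in $z$, and summing them gives
\[
\chi(a,b)=\frac{c_d^2}{1-z}\Bigl[\bar\alpha^{a}(1-z^{d-a})+\bar\alpha^{a-d}z^{d-a}(1-z^{a})\Bigr].
\]
For $a=0$, $b\ne0$ this reduces to $c_d^2(1-\rho^{2d})/(1-z)\ne0$. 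For $a\ne0$, set $\beta=\bar\alpha^d$ (so $|\beta|=r:=\rho^d$) and $w=z^{d-a}$. The vanishing condition then rearranges to the linear equation
\[
w=\frac{r^2-\beta}{1-\beta}.
\]

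\emph{Zero classification.} This is the key step. I take moduli in the vanishing condition. Writing $\beta=re^{i\psi}$, one has $|r^2-\beta|=r\,|r-e^{i\psi}|$ and $|1-\beta|=|r-e^{i\psi}|$, so every zero forces $|w|=\rho^d$. Since $|w|=\rho^{2(d-a)}$ and $0<\rho<1$, this gives $a=d/2$. Hence zeros are impossible for odd $d$, and for even $d$ they occur only on the translation label $a=d/2$.

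With $a=d/2$ we have $w=\rho^d(-1)^b$, and the equation becomes $\beta(rs-1)=r(s-r)$ with $s=(-1)^b$. Solving gives $\beta=-sr$. Equivalently, $e^{-id\theta}=-(-1)^b$, which is exactly \eqref{eq:geometric-parity}, and it is solvable only when $d\theta\in\pi\Z$; this proves \eqref{eq:geometric-criterion}. Conversely, when $d\theta\in\pi\Z$, every $b$ of the required parity, and no other $b$, gives a zero. That is exactly $d/2$ displacements, hence $d/2$ zero projector-Gram eigenvalues by Eq.~\eqref{eq:spectrum-multiset}.

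\emph{Spectral bound.} Rather than using the closed form, I will bound $\chi$ termwise, which avoids cancellation issues. Each term has modulus $c_d^2\rho^{j+(j+a\bmod d)}$, with exponent $2j+a$ for $j<d-a$ and exponent $2j+a-d$ for $j\ge d-a$. Choose $a=\lfloor d/2\rfloor\ge1$ and any $b$. Then the first block is at most $\rho^{a}/(1-\rho^2)$ and the second block is at most $\rho^{d-a}/(1-\rho^2)$, both times $c_d^2$. Since $c_d^2/(1-\rho^2)=1/(1-\rho^{2d})$, this gives
\[
|\chi(a,b)|\le 2\rho^{\lfloor d/2\rfloor}/(1-\rho^{2d}).
\]
Squaring and multiplying by $d+1$ yields \eqref{eq:geometric-bound}.

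The main obstacle is the modulus identity $|w|=\rho^d$ in the classification step. Once it is established, the rest is bookkeeping: checking that the denominator $1-z$ never vanishes and that the parity count is exact.
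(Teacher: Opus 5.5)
Your proposal is correct and follows essentially the same route as the paper's proof in Appendix~\ref{app:geometric}: the same wrap-split closed form for $\chi_\alpha(a,b)$, and the same modulus comparison $\rho^{2(d-a)}=\rho^d$ forcing $a=d/2$. At $a=d/2$ you solve the linear equation for $\bar\alpha^{\,d}$ where the paper factors $\chi_\alpha(d/2,b)$ into $\bar\alpha^{\,d}+\rho^d(-1)^b$, which gives the same parity condition, and your termwise bound at $a=\lfloor d/2\rfloor$ is the paper's nonwrapped-sum estimate.
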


The proof, including the comparison between displacement conventions,
is given in Appendix~\ref{app:geometric}.  The explicit sufficient
family of Ref.~\cite{Farkas2026} takes
\[
 \alpha=\rho e^{2\pi i t},
 \qquad
 \rho<\frac12,
\]
with arbitrary \(t\) in odd dimension and
\(t\notin(2d)^{-1}\Z\) in even dimension.  Equation~\eqref{eq:geometric-bound}
therefore yields the parameter-independent envelope
\begin{equation}
 \eta
 <
 \frac{4(d+1)2^{-2\lfloor d/2\rfloor}}
 {(1-2^{-2d})^2}
 =
 O(d\,2^{-d}).
 \label{eq:BIC-envelope}
\end{equation}
Hence the SIC-normalized floor of the explicit WH family used to
establish BIC existence in every dimension is bounded above by an
exponentially decaying envelope.  This does not alter the ideal device-independent
randomness result of Ref.~\cite{Farkas2026}, which concerns exact
certification from BIC structure at the maximal Bell value.

The same phase classification also resolves a previously used
numerical choice.  For
\(\alpha=(1+i)/2\), considered in Ref.~\cite{Singal2022},
Proposition~\ref{prop:geometric} gives exactly \(d/2\) zero
projector-Gram eigenvalues whenever \(4\mid d\), while the lower
spectral scale decays exponentially in the remaining dimensions.
Thus, when \(4\mid d\), this particular WH POVM is neither IC nor
extremal because its \(d^2\) rank-one effects are linearly dependent.

\subsection{Polynomial stability in every cyclic dimension}

Exponential ill conditioning is not unavoidable even in arbitrary cyclic dimensions.  The explicit parity-dependent family below remains IC in every integer dimension and loses stability only polynomially.  It therefore occupies an intermediate regime between the geometric family from the explicit sufficient IC region and the uniformly stable finite-field constructions developed below.

\begin{proposition}[Explicit polynomially stable cyclic WH-IC measurements]
\label{prop:universal-cyclic}
For every integer \(d\ge2\), define a normalized cyclic WH fiducial by the following formula, with \(\zeta=e^{2\pi i/(d+1)}\) in the even case:
\begin{equation}
 \ket{\phi_d^{\rm cyc}}=
 \begin{cases}
 \displaystyle\frac1{\sqrt{d-1}}\sum_{x=1}^{d-1}\ket x,
 &d\text{ odd},\\[3mm]
 \displaystyle\frac1{\sqrt{d+3}}
 \left(2\ket0+\sum_{x=1}^{d-1}\zeta^x\ket x\right),
 &d\text{ even}
 \end{cases}.
 \label{eq:universal-cyclic-state}
\end{equation}
Its orbit is IC.  In odd dimension its exact spectral floor is
\begin{equation}
 \lambda(\phi_d^{\rm cyc})
 =\frac{2d}{(d-1)^2}\left(1-\cos\frac\pi d\right)
 =\Theta(d^{-3}).
 \label{eq:universal-odd-floor}
\end{equation}
In even dimension,
\begin{equation}
 \frac{16}{d(d+1)^2(d+3)^2}
 \le\lambda(\phi_d^{\rm cyc})
 \le\frac{32\pi^4}{d^5},
 \label{eq:universal-even-bounds}
\end{equation}
so its floor is \(\Theta(d^{-5})\).  In particular, the parity-defined family obeys the all-dimensional bound
\begin{equation}
 \eta(\phi_d^{\rm cyc})
 \ge\frac{128}{75}\,d^{-5}.
 \label{eq:universal-eta-bound}
\end{equation}
\end{proposition}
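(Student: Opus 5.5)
The plan is to compute the ambiguity function $\chi_\phi(a,b)=\sum_x\overline{\phi_{x+a}}\,\phi_x\,\omega^{bx}$ in closed form for each parity. We then read off the floor through Eq.~\eqref{eq:lambda-min}, $\lambda=d\min_{u\ne0}|\chi_\phi(u)|^2$. Informational completeness is then just the nonvanishing statement of Eq.~\eqref{eq:IC-iff}, so no separate IC argument is needed.

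\emph{Odd $d$.} The fiducial is the normalized indicator of $\Z_d\setminus\{0\}$, so $(d-1)\chi(a,b)$ is a character sum $\sum_x\omega^{bx}$ over the $x$ with $x\ne0$ and $x+a\ne0$. There are three cases.
\begin{itemize}
\item For $a=0,b\ne0$ the sum is $-1$.
\item For $a\ne0,b=0$ it equals $d-2$.
\item For $a\ne0,b\ne0$ two points are removed, so the sum is $-1-\omega^{-ab}$. Hence $|(d-1)\chi|^2=2+2\cos(2\pi ab/d)$.
\end{itemize}
In the third case $ab$ runs over all nonzero residues. Since $d$ is odd, $ab\equiv d/2$ is impossible, and the minimum of $2+2\cos(2\pi ab/d)$ is attained at $ab=(d\pm1)/2$, where it equals $2-2\cos(\pi/d)$. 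This is smaller than the value $1$ from the $a=0$ case. The floor \eqref{eq:universal-odd-floor} then follows, and $1-\cos(\pi/d)\asymp d^{-2}$ gives $\Theta(d^{-3})$.

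\emph{Even $d$.} Write $\sqrt{d+3}\,\phi_x=\zeta^x+\delta_{x,0}$ for $x\in\{0,\dots,d-1\}$. Expanding $\chi$ gives four pieces.
\begin{itemize}
\item The $\zeta\zeta$ piece splits into the non-wrapping range $x<d-a$, with phase $\zeta^{-a}$, and the wrapping range, with phase $\zeta^{-a-1}$ because $\zeta^{d}=\zeta^{-1}$. Each range gives a geometric sum in $\omega^{b}$.
\item The two cross terms are $\zeta^{-a}$ (from $x=0$) and $\zeta^{-a}\omega^{-ab}$-type contributions (from $x=-a$).
\item The $\delta\delta$ term appears only when $a=0$.
\end{itemize}
Combining these, $(d+3)\chi(a,b)$ for $b\ne0$ becomes an explicit expression in $\omega^{-ab}$ and $\zeta^{-1}$: after clearing the denominator $1-\omega^b$, a short linear combination of $d(d+1)$-th roots of unity. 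The cases $b=0$ and $a=0$ are handled separately and give large or easily bounded values.

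For the lower bound, the key point is that $\gcd(d,d+1)=1$: a cancellation would force an identity between roots of unity of incompatible orders, and checking this shows no cancellation occurs. Quantitatively, we rewrite the numerator as a product or sum of factors $|1-e^{2\pi i k/(d(d+1))}|$ with $k\not\equiv0$. We then apply $|1-e^{2\pi i t}|\ge 4\,\mathrm{dist}(t,\Z)\ge 4/(d(d+1))$, and bound the denominator $|1-\omega^b|\le2$. The target is $|(d+3)\chi|\ge 4/(d(d+1))$, which gives exactly $\lambda\ge 16/[d(d+1)^2(d+3)^2]$. For the upper bound we exhibit a specific displacement near $a=d/2$ with $b$ chosen so that the phase nearly cancels. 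There the numerator is $O(d^{-2})$ times $\pi^2$-type constants, and $d|\chi|^2\le 32\pi^4/d^5$ follows from $|1-e^{i\theta}|\le|\theta|$.

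Finally, for \eqref{eq:universal-eta-bound}, in the odd case we use $1-\cos x\ge 2x^2/\pi^2$ to get $\eta\ge 4(d+1)/[d^2(d-1)^2]$, which dominates $\tfrac{128}{75}d^{-5}$. In the even case $\eta\ge 16/[d^2(d+1)(d+3)^2]$, and $d^5$ times this is increasing in $d$ with equality at $d=2$, where both sides equal $16/300$. The main obstacle is the even case: organizing the wrapped geometric sums so that the minimal-modulus combination is isolated cleanly, and proving that its modulus never falls below $4/(d(d+1))$ over all $(a,b)$. I would first tabulate small $d$ numerically to identify the minimizing $(a,b)$ and the exact form of the extremal root-of-unity combination.
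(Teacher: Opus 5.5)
Your odd-dimensional computation and your $\eta$ arithmetic are correct and match the paper. One small slip: at $d=3$ the off-axis minimum $2-2\cos(\pi/3)$ equals the axis value $1$, so ``smaller'' should be ``not larger''. The genuine gap is the even-dimensional lower bound, which is the substantive part of the proposition.

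After clearing the denominator, with $r=\omega^b$, $s=r^{-a}$, $\sigma=\zeta^{-1}$, the numerator is $\zeta^{a}(1-r)A_{a,b}=(1-r)(1+\sigma s)+(1-\sigma)(1-s)$. This is a \emph{sum} of two products, and the two products really do nearly cancel. At the near-minimizer $a=1$, $b=d/2-2$, each product has modulus $\approx4\pi/d$, while their sum is $O(d^{-2})$. That cancellation is exactly what produces the $\Theta(d^{-5})$ floor, so no termwise bound of the form $|1-e^{2\pi i\tau}|\ge4\,\mathrm{dist}(\tau,\Z)$ can certify it.

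Coprimality of $d$ and $d+1$ is purely qualitative. Made rigorous, it does prove IC: writing the numerator as $(2-r-s)+\sigma(2s-1-rs)$ and using that $1,\sigma$ are linearly independent over $\mathbb{Q}(\omega)$, vanishing would force $2-r-s=0$, hence $r=1$, which is excluded. But it yields no polynomial quantitative bound. Your numerology also does not close: one factor $\ge4/(d(d+1))$ divided by $|1-\omega^b|\le2$ gives only $|A_{a,b}|\ge2/(d(d+1))$, which is a factor $4$ short in $\lambda$.

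The missing idea is to factor out the common phase, so that the competing contributions separate into orthogonal real components. Set $h=\pi/(d+1)$, $k=\pi b/d$, $T=\pi ab/d$. Then $|A_{a,b}|=2\sqrt{R^2+I^2}$ with $R=\cos T\cos h-2\sin T\sin h$ and $I=-\sin T\sin h\cot k$. The bound then follows by cases:
\begin{itemize}
\item If $|\sin T|\ge1/2$ and $b\ne d/2$, then $|I|\ge\tfrac12\sin h\tan(\pi/d)\ge2/(d(d+1))$.
\item If $|\sin T|<1/2$, then $|R|>1/10$ for $d\ge4$.
\item If $b=d/2$, then $I=0$ and $|A_{a,b}|\in\{2\cos h,4\sin h\}$.
\item On the axes, $|A_{0,b}|=3$ and $|A_{a,0}|\ge(a+1)\sin2h$.
\item The case $d=2$ is checked directly.
\end{itemize}
Together these give exactly $|A_{a,b}|\ge4/(d(d+1))$.

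The same formula also fixes the upper-bound witness, which your sketch leaves unspecified. Making $R\approx0$ requires $\cot T\approx2\tan h$, i.e.\ $ab\equiv d/2-2$. Making $I\approx0$ requires $\cot k=O(1/d)$, i.e.\ $b$ (not $a$) close to $d/2$. The paper takes $a=1$, $b=d/2-2$, which gives $|R|,|I|\le2\pi^2/d^2$ and hence $\lambda\le32\pi^4/d^5$.
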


Appendix~\ref{app:obstructions} derives the complete odd spectrum and exact displacement-labelled even spectral values directly from the ambiguity function.  WH twirling and IC imply that Proposition~\ref{prop:universal-cyclic} gives, in every integer dimension, an explicit minimal equal-weight rank-one POVM with exactly \(d^2\) outcomes.  Within this complete rank-one WH setting it is also BIC and can instantiate the ideal BIC-based device-independent randomness protocol of Ref.~\cite{Farkas2026}.  Combined with Corollary~\ref{cor:shadow}, the same construction gives an explicit all-dimensional canonical IC-shadow family whose worst-direction spectral factor \(d/\lambda(\phi_d^{\rm cyc})\) is \(O(d^4)\) in odd dimensions and \(O(d^6)\) in even dimensions.  The first consequence concerns the ideal maximal-violation protocol, while the second is a polynomial spectral guarantee for the canonical estimator.

\section{Uniformly stable finite-field constructions}
\label{sec:construction}

We use two distinct phase-space structures: the cyclic WH group
\(\Z_d^2\), defined for every integer dimension \(d\), and the
finite-field WH group \(\F_q^2\).  Proposition~\ref{prop:universal-cyclic}
applies to the former in every dimension.  The finite-field setting
supports two complementary uniformly stable families below: a dedicated
characteristic-two construction for \(q=2^m\), and the asymptotically
near-SIC balanced-Alltop construction for characteristic \(p\ge5\).

Let \(q=p^r\) be a prime power, let \(\F_q\) be the finite field, and let
\begin{equation}
 \psi(x)=\exp\!\left[\frac{2\pi i}{p}
 \Tr_{\F_q/\F_p}(x)\right]
 \label{eq:additive-character}
\end{equation}
be its canonical additive character.  On \(\cH_q=\C^{\F_q}\), define
\begin{equation}
 X_a\ket x=\ket{x+a},
 \qquad
 Z_b\ket x=\psi(bx)\ket x,
 \qquad a,b\in\F_q.
 \label{eq:field-WH}
\end{equation}
Set \(D_{a,b}=X_aZ_b\),
\(
\Pi_{a,b}=D_{a,b}\ketbra\phi\phi D_{a,b}^\dagger
\), and \(E_{a,b}=\Pi_{a,b}/q\).  Throughout this section,
\begin{equation}
 G_\phi=G_\phi^\Pi,
 \qquad
 G_\phi^\Pi[(a,b),(c,e)]
 =\Tr(\Pi_{a,b}\Pi_{c,e}),
 \label{eq:field-projector-Gram}
\end{equation}
is the \(q^2\times q^2\) projector Gram matrix, and \(\lambda(\phi)\) is its smallest nonidentity eigenvalue.  Thus every spectral interval stated below refers to nonidentity projector-Gram eigenvalues.  Finite-field WH twirling gives \(\sum_{a,b\in\F_q}\Pi_{a,b}=qI\).  Consequently, whenever the orbit is IC, its \(q^2\) rank-one projectors are linearly independent and define a minimal equal-weight BIC measurement.
Theorem~\ref{thm:WH-spectrum} extends after replacing cyclic characters by additive-field characters:
\begin{equation}
 \Spec(G_\phi)=
 \{q|\bra\phi X_aZ_b\ket\phi|^2:(a,b)\in\F_q^2\}.
 \label{eq:field-spectrum}
\end{equation}
Appendix~\ref{app:finite-field} fixes the phase-space labels.  When \(r>1\), this is the finite-field WH group rather than the cyclic group \(\Z_q^2\).
The Moyal identity and the SIC endpoint bound in Proposition~\ref{prop:SIC-endpoint} extend verbatim after replacing cyclic characters by additive-field characters.

\subsection{Characteristic two and multi-qubit dimensions}

Assume $q=2^m$ with $m\ge1$.  After choosing an $\mathbb F_2$-basis
of $\mathbb F_q$, we identify
$\mathcal H_q\simeq(\mathbb C^2)^{\otimes m}$, under which
\[
\frac{1}{\sqrt q}\sum_{x\in\mathbb F_q}\ket{x}
=
\ket{+}^{\otimes m},
\qquad
\ket{0}=\ket{0}^{\otimes m}.
\]
Hence the fiducial used below is simply a superposition of the two
product states $\ket{+}^{\otimes m}$ and $\ket{0}^{\otimes m}$.
We use finite-field notation to describe its Weyl--Heisenberg orbit.
Let
\[
\psi(x)=(-1)^{\Tr_{\mathbb F_q/\mathbb F_2}(x)}
\]
denote the canonical additive character, and set
\begin{align}
\ket{+_q}
&=\frac{1}{\sqrt q}\sum_{x\in\mathbb F_q}\ket{x},
\notag\\
\ket{\phi_{q,\theta}}
&=
\frac{\ket{+_q}+e^{i\theta}\ket{0}}
{\sqrt{N_{q,\theta}}},
\notag\\
N_{q,\theta}
&=2+\frac{2\cos\theta}{\sqrt q}.
\label{eq:char-two-state}
\end{align}
We choose
\begin{equation}
\theta_q=
\begin{cases}
5\pi/12, & q=2,\\
2\pi/3, & q=4,\\
3\pi/4, & q\ge8,
\end{cases}
\qquad
\ket{\phi_q^{(2)}}=\ket{\phi_{q,\theta_q}}.
\label{eq:char-two-phase}
\end{equation}

\begin{theorem}[Uniform stability in characteristic two]
\label{thm:char-two}
The finite-field WH orbit of \(\phi_q^{(2)}\) defines a minimal
equal-weight rank-one IC POVM with exactly \(q^2\) outcomes and hence a
BIC measurement.  For \(q\ge8\), set
\begin{equation}
 \mu_q=\frac{2}{(2-\sqrt{2/q})^2},
 \qquad
 \nu_q=
 \frac{q(1-\sqrt{2/q})^2}{(2-\sqrt{2/q})^2}.
 \label{eq:char-two-mu-nu}
\end{equation}
Its complete nonidentity projector-Gram spectrum is
\begin{equation}
 \Spec\!\left(
 G_{\phi_q^{(2)}}^\Pi\big|_{\boldsymbol{1}^{\perp}}
 \right)
 =
 \begin{cases}
  \{(2/3)^{(3)}\},&q=2,\\
  \{(4/9)^{(9)},(4/3)^{(6)}\},&q=4,\\
  \{(8/9)^{(63)}\},&q=8,\\
  \{\mu_q^{((q-1)^2)},\nu_q^{(2(q-1))}\},&q>8.
 \end{cases}
 \label{eq:char-two-spectrum}
\end{equation} 
Here $x^{(m)}$ denotes an eigenvalue $x$ with multiplicity $m$.
Thus, at $q=8$, all $63=q^2-1$ nonidentity eigenvalues are equal to $8/9$. 
Consequently,
\begin{equation}
 \lambda(\phi_q^{(2)})=
 \begin{cases}
  2/3,&q=2,\\
  4/9,&q=4,\\
  \displaystyle\frac{2}{(2-\sqrt{2/q})^2},&q\ge8,
 \end{cases}
 \label{eq:char-two-floor}
\end{equation}
and
\begin{align}
 \inf_{m\ge1}\lambda(\phi_{2^m}^{(2)})&=\frac49,
 \notag\\
 \inf_{m\ge1}\eta(\phi_{2^m}^{(2)})&=\frac12,
 &
 \eta(\phi_{2^m}^{(2)})&\longrightarrow\frac12.
 \label{eq:char-two-uniform}
\end{align}
Here \(\eta=(q+1)\lambda/q\); in particular,
\(\eta(\phi_2^{(2)})=\eta(\phi_8^{(2)})=1\) and
\(\eta(\phi_4^{(2)})=5/9\).  The \(q=2\) and \(q=8\) members attain
the finite-field WH SIC endpoint.
\end{theorem}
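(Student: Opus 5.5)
The plan is to compute the finite-field ambiguity function $\chi(a,b)=\bra{\phi_{q,\theta}}X_aZ_b\ket{\phi_{q,\theta}}$ in closed form for general $\theta$, and only then specialize $\theta=\theta_q$. The spectrum is then read off from Eq.~\eqref{eq:field-spectrum}. IC follows from Eq.~\eqref{eq:IC-iff} in its field form, and minimality and the BIC property follow from the twirling identity $\sum\Pi_{a,b}=qI$ stated before the theorem.

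\emph{Step 1: the four matrix elements.} Expanding the state gives four terms.
\begin{itemize}
\item The $\ket{+_q}$--$\ket{+_q}$ term is $\bra{+_q}X_aZ_b\ket{+_q}=\delta_{b,0}$, by character orthogonality, since $X_a$ fixes $\ket{+_q}$.
\item The $\ket0$--$\ket0$ term is $\bra0X_aZ_b\ket0=\delta_{a,0}$.
\item The first cross term is $\bra{+_q}X_aZ_b\ket0=q^{-1/2}$, because $Z_b\ket0=\ket0$.
\item The second cross term is $\bra0X_aZ_b\ket{+_q}=q^{-1/2}\psi(ab)$, using $x+a=0\iff x=a$ in characteristic two.
\end{itemize}
Hence
\[
 \chi(a,b)=\frac{1}{N_{q,\theta}}\Bigl[\delta_{b,0}+\delta_{a,0}+q^{-1/2}\bigl(e^{i\theta}+e^{-i\theta}\psi(ab)\bigr)\Bigr].
\]
I will double-check which cross term carries the phase $e^{i\theta}$ and which carries $e^{-i\theta}$; the moduli below do not depend on this choice.

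\emph{Step 2: sort the nonidentity labels into classes.}
\begin{itemize}
\item Off-axis labels with $ab\ne0$ and $\psi(ab)=+1$ give $q|\chi|^2=4\cos^2\theta/N^2$.
\item Off-axis labels with $\psi(ab)=-1$ give $q|\chi|^2=4\sin^2\theta/N^2$.
\item The $2(q-1)$ axis labels give $q|\chi|^2=q(1+2\cos\theta/\sqrt q)^2/N^2$.
\end{itemize}
For the off-axis multiplicities, note that as $(a,b)$ runs over $(\F_q^\times)^2$, the product $ab$ hits each nonzero field element $q-1$ times. Exactly $q/2-1$ nonzero elements have trace zero. So the off-axis class has $(q-1)(q/2-1)$ labels with $\psi=+1$ and $(q-1)q/2$ labels with $\psi=-1$.

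\emph{Step 3: specialize $\theta$.}
\begin{itemize}
\item For $q\ge8$, take $\theta=3\pi/4$. Then $4\cos^2\theta=4\sin^2\theta=2$, so the two off-axis classes merge into $(q-1)^2$ copies of $\mu_q$, using $N=2-\sqrt{2/q}$. The axis value becomes $\nu_q$. At $q=8$, $\sqrt{2/q}=1/2$, so $\mu_8=\nu_8=8/9$ and all $63$ eigenvalues coincide.
\item For $q=4$, take $\theta=2\pi/3$, so $N=3/2$. The off-axis classes give $4/9$ with multiplicity $3$ and $4/3$ with multiplicity $6$. The axis value is $4\cdot(1/2)^2/(9/4)=4/9$ with multiplicity $6$. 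This yields $\{(4/9)^{(9)},(4/3)^{(6)}\}$.
\item For $q=2$, take $\theta=5\pi/12$ and verify the three values equal $2/3$ by direct trigonometric evaluation.
\end{itemize}

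\emph{Step 4: the floor and the limits.} For $q>8$ I must show $\nu_q\ge\mu_q$. This is equivalent to $q(1-\sqrt{2/q})^2\ge2$, i.e. $\sqrt q-\sqrt2\ge\sqrt2$, which holds for $q\ge8$. Hence $\lambda=\mu_q$ for $q\ge8$. Next, $\mu_q$ decreases to $1/2$ and stays above $1/2>4/9$. Together with $2/3$ at $q=2$ and $4/9$ at $q=4$, this gives $\inf\lambda=4/9$.

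For $\eta=(q+1)\mu_q/q$, the values are $1$ at $q=2$ and $q=8$, and $5/9$ at $q=4$. As $q\to\infty$, $\eta\to1/2$. The infimum equal to $1/2$ needs $(q+1)\mu_q/q>1/2$ for all $q\ge8$, which is immediate since $\mu_q>1/2$. Finally, the SIC claims at $q=2$ and $q=8$ follow from the flat spectrum via Proposition~\ref{prop:SIC-endpoint} in its field form.

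\emph{Main obstacle.} The hard part is bookkeeping rather than analysis. I need to get the characteristic-two sign in the second cross term right, confirm that every class is nonzero (which gives IC), and count the trace-zero multiplicities correctly. The $q=2$ check with $\theta=5\pi/12$ is the most error-prone numeric step, so I would verify it explicitly on the $2\times2$ Pauli matrices.
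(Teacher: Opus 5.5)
Your proposal is correct and follows the paper's proof essentially step for step. It uses the same four matrix elements and the same closed form for \(\chi_{\phi_{q,\theta}}\), and the same three branches (axis, trace-zero, trace-one) with multiplicities \(2(q-1)\), \((q-1)(q/2-1)\) and \((q-1)q/2\). The specializations of \(\theta_q\), the comparison \(\nu_q\ge\mu_q\iff q\ge8\) (which the paper writes as \(\nu_q/\mu_q=(\sqrt q-\sqrt2)^2/2\ge1\)), and the final step from strict positivity of every present branch to IC, minimality and BIC all match the paper's argument.
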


\begin{proof}
For \(D_{a,b}=X_aZ_b\), characteristic two gives
\begin{align}
 \bra{+_q}D_{a,b}\ket{+_q}&=\delta_{b,0},&
 \bra{+_q}D_{a,b}\ket0&=q^{-1/2},
 \notag\\
 \bra0D_{a,b}\ket{+_q}&=q^{-1/2}\psi(ab),&
 \bra0D_{a,b}\ket0&=\delta_{a,0}.
 \label{eq:char-two-matrix-elements}
\end{align}
It follows that
\begin{equation}
 \chi_{\phi_{q,\theta}}(a,b)
 =
 \frac{\delta_{b,0}+\delta_{a,0}
 +q^{-1/2}\left(e^{i\theta}+e^{-i\theta}\psi(ab)\right)}
 {N_{q,\theta}}.
 \label{eq:char-two-characteristic}
\end{equation}
By Eq.~\eqref{eq:field-spectrum}, the nonidentity projector-Gram
eigenvalues therefore belong to the three branches
\begin{align}
 \mu_{\rm ax}(q,\theta)
 &=\frac{q(1+2\cos\theta/\sqrt q)^2}{N_{q,\theta}^2},
 \label{eq:char-two-axis}\\
 \mu_0(q,\theta)
 &=\frac{4\cos^2\theta}{N_{q,\theta}^2},
 \label{eq:char-two-trace-zero}\\
 \mu_1(q,\theta)
 &=\frac{4\sin^2\theta}{N_{q,\theta}^2}.
 \label{eq:char-two-trace-one}
\end{align}
The axis branch has multiplicity \(2(q-1)\).  For each \(a\ne0\),
the map \(b\mapsto\Tr_{\F_q/\F_2}(ab)\) is a nonzero
\(\F_2\)-linear functional.  The mixed trace-zero and trace-one
branches consequently have multiplicities \((q-1)(q/2-1)\) and
\((q-1)q/2\), respectively; the former is absent for \(q=2\).

For \(q=2\), substitution of \(\theta=5\pi/12\) makes both present
branches equal to \(2/3\).  For \(q=4\) and \(\theta=2\pi/3\),
\begin{equation}
 \mu_{\rm ax}=\mu_0=\frac49,
 \qquad
 \mu_1=\frac43.
\end{equation}
For \(q\ge8\), taking \(\theta=3\pi/4\) gives
\begin{equation}
 N_{q,\theta}=2-\sqrt{2/q},
 \qquad
 \mu_0=\mu_1=\mu_q,
 \qquad
 \mu_{\rm ax}=\nu_q.
\end{equation}
Moreover,
\begin{equation}
 \frac{\nu_q}{\mu_q}
 =\frac{(\sqrt q-\sqrt2)^2}{2}\ge1,
 \label{eq:char-two-comparison}
\end{equation}
with equality exactly at \(q=8\).  This proves the complete spectrum and
the floor, including the coalescence of all \(63\) nonidentity
eigenvalues at \(q=8\).

All three branches are strictly positive when present, so the orbit is
IC.  Finite-field WH twirling makes its \(q^2\) rank-one effects an
equal-weight POVM; IC makes them linearly independent and hence minimal,
and Eq.~\eqref{eq:BIC-IC} gives the BIC conclusion.  Finally,
\(\mu_q>1/2\) for finite \(q\ge8\) and \(\mu_q\to1/2\).  The displayed
low-dimensional values prove Eq.~\eqref{eq:char-two-uniform}.  At
\(q=2,8\), every nonidentity eigenvalue equals \(q/(q+1)\), so
Proposition~\ref{prop:SIC-endpoint} gives the endpoint assertion.
\end{proof}

After choosing an \(\F_2\)-basis of \(\F_q\) and its trace-dual basis,
\(\cH_q\cong(\C^2)^{\otimes m}\), \(\ket{+_q}=\ket+^{\otimes m}\), and
the finite-field displacement operators become tensor-product Pauli
operators up to phases.  Theorem~\ref{thm:char-two} therefore applies to
all multi-qubit Hilbert-space dimensions; this is a Hilbert-space
statement and does not assert an efficient measurement circuit.  Relative
to the even-dimensional cyclic family, it replaces a \(\Theta(q^{-5})\)
floor by a uniform one, but uses the different phase space \(\F_q^2\).
It gives the SIC-scale inverse and canonical-shadow factor
\begin{equation}
 \frac{q}{\lambda(\phi_q^{(2)})}\le\frac94q,
 \label{eq:char-two-operational}
\end{equation}
and its complete spectrum determines the canonical tomography MSE in
Corollary~\ref{cor:linear-MSE}.  As a BIC measurement, it can also
instantiate the ideal protocol of Ref.~\cite{Farkas2026}.  On the other
hand, Eq.~\eqref{eq:char-two-comparison} grows as \(q/2\), so the family
is uniformly stable but not asymptotically spectrally flat.  The
balanced-Alltop family below retains the stronger near-SIC conclusion
where it applies.

\subsection{A flat profile with a zero axis}

For the remainder of this section, let \(q=p^r\) have characteristic
\(p\ge5\).  The construction has a simple spectral picture.  Away from
one missing phase-space axis, the unperturbed Alltop ambiguity profile is
flat at magnitude \(q^{-1/2}\), asymptotically the SIC scale.  A
coordinate spike lifts the missing axis while only slightly distorting
the flat bulk.  The repaired-axis amplitude is proportional to
\(t^2+2t/\sqrt q\), whose quadratic term dominates at the balanced scale,
whereas the bulk distortion is linear in \(t\).  Balancing the two
selects a spike size of order \(q^{-1/4}\).  This cubic mechanism excludes
characteristics two and three; characteristic two is handled separately
by Theorem~\ref{thm:char-two}, while a uniformly stable finite-field
construction in characteristic three remains open.

Define the cubic Alltop state \cite{Alltop1980,Klappenecker2004,Hall2013,Goldberger2022}
\begin{equation}
 \ket{A_q}=\frac1{\sqrt q}
 \sum_{x\in\F_q}\psi(x^3)\ket x.
 \label{eq:Alltop}
\end{equation}

\begin{lemma}[Alltop ambiguity profile]
\label{lem:Alltop-profile}
Its characteristic function satisfies
\begin{equation}
 |\chi_{A_q}(a,b)|=
 \begin{cases}
 1,&(a,b)=(0,0),\\
 0,&a=0,\ b\ne0,\\
 q^{-1/2},&a\ne0.
 \end{cases}
 \label{eq:Alltop-profile}
\end{equation}
Consequently,
\begin{equation}
 \Spec(G_{A_q})=
 \{q^{(1)},1^{(q(q-1))},0^{(q-1)}\}.
 \label{eq:Alltop-spectrum}
\end{equation}
\end{lemma}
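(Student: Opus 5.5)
I will compute \(\chi_{A_q}(a,b)=\bra{A_q}X_aZ_b\ket{A_q}\) directly and reduce it to a character sum over \(\F_q\). Since \(X_aZ_b\ket x=\psi(bx)\ket{x+a}\), pairing with \(\bra{A_q}\) gives
\[
 \chi_{A_q}(a,b)=\frac1q\sum_{x\in\F_q}\psi(bx)\,\psi(x^3)\,\overline{\psi((x+a)^3)}
 =\frac1q\sum_{x\in\F_q}\psi\!\left(bx-3ax^2-3a^2x-a^3\right).
\]
The cubic terms cancel, so the exponent is at most quadratic in \(x\). This is the whole mechanism.

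The case split follows the statement. At \((a,b)=(0,0)\) the value is \(1\) by normalization. For \(a=0\), \(b\ne0\), the sum is \(q^{-1}\sum_x\psi(bx)=0\) by orthogonality of the nontrivial additive character \(x\mapsto\psi(bx)\); this gives the zero axis. For \(a\ne0\), the exponent is a genuine quadratic with leading coefficient \(-3a\ne0\), using \(p\ge5\) so that \(3\) is invertible. Because \(p\) is odd, I will complete the square:
\[
 -3ax^2+(b-3a^2)x=-3a(x-c)^2+\text{const},\qquad c=\frac{b-3a^2}{6a}.
\]
This is legitimate since \(2\) and \(3\) are units. The sum then becomes a unimodular constant times the quadratic Gauss sum \(\sum_y\psi(-3ay^2)\). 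Its modulus is \(\sqrt q\) for any nonzero coefficient, by the standard evaluation of quadratic Gauss sums over \(\F_q\) in odd characteristic. Hence \(|\chi_{A_q}(a,b)|=q^{-1/2}\).

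If I wanted to avoid quoting the Gauss-sum evaluation, I would prove \(|G|^2=q\) directly. Write \(|G|^2=\sum_{y,z}\psi(\alpha(y^2-z^2))\), substitute \(z=y+h\), and sum over \(y\). This leaves \(\sum_h\psi(-\alpha h^2)\sum_y\psi(-2\alpha hy)\), and only \(h=0\) survives because \(2\alpha\ne0\). This self-contained check is the one step that needs care. The main point to watch is that \(2\alpha h\ne0\) for \(h\ne0\), which uses odd characteristic; the invertibility of \(3\) uses \(p\ne3\).

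The spectrum then follows from Eq.~\eqref{eq:field-spectrum}, which gives eigenvalues \(q|\chi_{A_q}(a,b)|^2\). The origin contributes \(q\) once. The \(q-1\) points with \(a=0\), \(b\ne0\) contribute \(0\). The \(q(q-1)\) points with \(a\ne0\) contribute \(q\cdot q^{-1}=1\). As a consistency check, the Moyal sum gives \(1+q(q-1)/q=q\), as required.
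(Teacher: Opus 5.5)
Your proposal is correct and follows essentially the same route as the paper. Both arguments substitute directly to reduce $\chi_{A_q}(a,b)$ to $q^{-1}\psi(-a^3)\sum_x\psi[-3ax^2+(b-3a^2)x]$, use character orthogonality on the $a=0$ axis, invoke the quadratic Gauss-sum modulus $\sqrt q$ for $a\neq0$ (which you additionally verify from scratch), and then read off the multiplicities via Eq.~\eqref{eq:field-spectrum}.
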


\begin{proof}
Direct substitution gives
\begin{equation}
 \chi_{A_q}(a,b)=\frac{\psi(-a^3)}q
 \sum_{x\in\F_q}\psi[-3ax^2+(b-3a^2)x].
 \label{eq:Alltop-Gauss}
\end{equation}
For \(a=0\), additive-character orthogonality gives the first two cases.  For \(a\ne0\), the quadratic coefficient is nonzero because \(p\ge5\); the standard finite-field quadratic Gauss identity gives modulus \(\sqrt q\) \cite[Chap.~5]{LidlNiederreiter1997}.  Equation~\eqref{eq:field-spectrum} gives Eq.~\eqref{eq:Alltop-spectrum}.
\end{proof}

The Alltop profile is incomplete, but all of its nonzero directions are already perfectly flat.  The next lemma isolates the zero-set repair mechanism and applies to any phase state with this profile, not only to the cubic example.

\begin{lemma}[Spectral repair of a zero axis]
\label{lem:repair}
Let
\begin{equation}
 \ket h=\frac1{\sqrt q}\sum_{x\in\F_q}\nu(x)\ket x,
 \qquad |\nu(x)|=1,\quad \nu(0)=1,
 \label{eq:flat-phase-state}
\end{equation}
and assume \(|\chi_h(a,b)|=q^{-1/2}\) for every \(a\ne0\).  For \(0<t<1/2\), set
\begin{equation}
 \ket{h_t}=\frac{\ket h+t\ket0}{\sqrt{N_{q,t}}},
 \qquad
 N_{q,t}=1+t^2+\frac{2t}{\sqrt q}.
 \label{eq:general-spike}
\end{equation}
Then, for \(b\ne0\),
\begin{equation}
 \chi_{h_t}(0,b)=
 \frac{t^2+2t/\sqrt q}{N_{q,t}},
 \label{eq:repair-axis}
\end{equation}
whereas for \(a\ne0\),
\begin{equation}
 \frac{1-2t}{\sqrt q\,N_{q,t}}
 \le |\chi_{h_t}(a,b)|
 \le\frac{1+2t}{\sqrt q\,N_{q,t}}.
 \label{eq:repair-off-axis}
\end{equation}
In particular, the orbit is IC and
\begin{equation}
 \lambda(h_t)\ge
 B_q(t):=
 \frac{\min\{(1-2t)^2,(\sqrt q\,t^2+2t)^2\}}
 {N_{q,t}^2}.
 \label{eq:repair-bound}
\end{equation}
\end{lemma}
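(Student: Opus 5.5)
The plan is to expand \(\chi_{h_t}(a,b)=\bra{h_t}X_aZ_b\ket{h_t}\) bilinearly into four terms, using \(\ket{h_t}=(\ket h+t\ket0)/\sqrt{N_{q,t}}\):
\[
 N_{q,t}\,\chi_{h_t}(a,b)=\chi_h(a,b)+t\bra hD_{a,b}\ket0+t\bra0D_{a,b}\ket h+t^2\bra0D_{a,b}\ket0 .
\]
First I check the normalization: \(\langle h|0\rangle=\overline{\nu(0)}/\sqrt q=1/\sqrt q\), which gives \(N_{q,t}=1+t^2+2t/\sqrt q\).

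Next I compute the cross terms directly from Eq.~\eqref{eq:field-WH}. Since \(Z_b\ket0=\ket0\), we have \(D_{a,b}\ket0=\ket a\), so \(\bra hD_{a,b}\ket0=\overline{\nu(a)}/\sqrt q\). Similarly \(\bra0X_aZ_b\ket h=\bra{-a}Z_b\ket h=\psi(-ab)\nu(-a)/\sqrt q\). Finally \(\bra0D_{a,b}\ket0=\delta_{a,0}\). All three terms have explicit modulus: \(1/\sqrt q\), \(1/\sqrt q\), and \(\delta_{a,0}\).

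On the axis \(a=0\), \(b\ne0\), I need \(\chi_h(0,b)=\frac1q\sum_x\psi(bx)=0\), which follows from character orthogonality because \(|\nu|=1\). The two cross terms become \(t\overline{\nu(0)}/\sqrt q\) and \(t\nu(0)/\sqrt q\). Each equals \(t/\sqrt q\) because \(\nu(0)=1\), and this is exactly where that normalization is used. Adding \(t^2\) gives Eq.~\eqref{eq:repair-axis}.

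Off the axis, \(a\ne0\), the \(t^2\) term vanishes. By hypothesis \(|\chi_h(a,b)|=q^{-1/2}\), and the two cross terms have total modulus at most \(2t/\sqrt q\). The triangle and reverse-triangle inequalities then give Eq.~\eqref{eq:repair-off-axis}. The lower bound is positive because \(t<1/2\).

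For the conclusion I use Eq.~\eqref{eq:field-spectrum} and the finite-field form of Eq.~\eqref{eq:lambda-min}, namely \(\lambda=q\min_{u\ne0}|\chi(u)|^2\). On the axis this gives \(q(t^2+2t/\sqrt q)^2/N^2=(\sqrt q\,t^2+2t)^2/N^2\). Off the axis it is at least \((1-2t)^2/N^2\). Taking the minimum gives \(B_q(t)\). Since every nonidentity value is strictly positive, the orbit is IC.

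No step is a real obstacle here. The only care needed is in the phase bookkeeping for the cross terms, specifically the sign of the translation label and the \(\psi(-ab)\) phase. Only their moduli matter off the axis, and on the axis they reduce to \(\nu(0)=1\), so any phase-convention discrepancy is harmless.
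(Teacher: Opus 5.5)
Your proposal is correct and follows essentially the same approach as the paper's proof. Both arguments expand the numerator bilinearly, compute the three matrix elements \(\bra hX_aZ_b\ket0\), \(\bra0X_aZ_b\ket h\) and \(\bra0X_aZ_b\ket0\), use character orthogonality together with \(\nu(0)=1\) on the axis, apply the triangle and reverse-triangle inequalities off the axis, and multiply by \(q\) via the finite-field spectrum; your explicit check of \(N_{q,t}\) is a small addition the paper leaves implicit.
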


\begin{proof}
Flat coordinate probabilities imply \(\chi_h(0,b)=0\) for \(b\ne0\).  The normalization in Eq.~\eqref{eq:flat-phase-state} gives
\(
\bra hX_aZ_b\ket0=q^{-1/2}\overline{\nu(a)}
\)
and
\(
\bra0X_aZ_b\ket h=q^{-1/2}\nu(-a)\psi(-ab),
\)
both of modulus \(q^{-1/2}\).  Moreover, \(\bra0X_aZ_b\ket0=\delta_{a,0}\).  Expanding the numerator of \(\chi_{h_t}\) gives Eq.~\eqref{eq:repair-axis}; the triangle and reverse-triangle inequalities give Eq.~\eqref{eq:repair-off-axis}.  Multiplication by \(q\) and Eq.~\eqref{eq:field-spectrum} yield Eq.~\eqref{eq:repair-bound}.
\end{proof}

Equation~\eqref{eq:repair-bound} makes this repair mechanism quantitative; the next step balances its two competing spectral scales exactly.

\subsection{Balanced perturbation and exact spectral floor}

Specialize Lemma~\ref{lem:repair} to \(h=A_q\).  The certified bound is uniquely maximized at the crossing of its two terms.  Indeed,
\begin{align}
 f_q(t)&:=\frac{1-2t}{N_{q,t}}
 &&\text{strictly decreases},\notag\\
 g_q(t)&:=\frac{\sqrt q\,t^2+2t}{N_{q,t}}
 &&\text{strictly increases}.
 \label{eq:monotone-amplitudes}
\end{align}
Direct differentiation gives
\begin{align}
 f_q'(t)&=-\frac{2(1+q^{-1/2}+t-t^2)}{N_{q,t}^2}<0,
 \notag\\
 g_q'(t)&=\frac{2(\sqrt q\,t+1)}{N_{q,t}^2}>0.
\end{align}
Their unique crossing is
\begin{equation}
 \sqrt q\,t^2+4t=1,
 \qquad
 t=t_q:=\frac1{\sqrt{4+\sqrt q}+2}.
 \label{eq:balanced-t}
\end{equation}
This optimizes the analytic lower bound \(B_q(t)\); it does not assert that \(t_q\) is the exact finite-\(q\) optimizer of the true minimum within the one-spike family, or a global max--min fiducial.

Set
\begin{equation}
 \ket{\phi_q}=\frac{\ket{A_q}+t_q\ket0}{\sqrt{N_q}},
 \qquad
 N_q=1+t_q^2+\frac{2t_q}{\sqrt q},
 \label{eq:balanced-state}
\end{equation}
and define
\begin{equation}
 L_q=\frac{(1-2t_q)^2}{N_q^2},
 \qquad
 U_q=\frac{(1+2t_q)^2}{N_q^2}.
 \label{eq:L-U}
\end{equation}

\begin{theorem}[Balanced-Alltop spectral interval and exact floor]
\label{thm:main-Alltop}
For every prime power \(q=p^r\) of characteristic \(p\ge5\),
\begin{equation}
 \lambda(\phi_q)=L_q.
 \label{eq:exact-L}
\end{equation}
Every nonidentity projector-Gram eigenvalue lies in \([L_q,U_q]\), and \(L_q\) has multiplicity at least \(q-1\).  Moreover, \(L_q\) is strictly increasing as a function of real \(q>0\), and therefore
\begin{equation}
 L_q\ge L_5=0.197863708777\ldots.
 \label{eq:uniform-L}
\end{equation}
\end{theorem}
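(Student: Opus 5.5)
The plan is to specialize Lemma~\ref{lem:repair} to \(h=A_q\) at \(t=t_q\), and then to use the crossing identity \eqref{eq:balanced-t} to show that the two branches of \(B_q(t)\) meet exactly there. The hypotheses of Lemma~\ref{lem:repair} hold by Lemma~\ref{lem:Alltop-profile}: the coefficients of \(A_q\) are unimodular phases \(\psi(x^3)\) with \(\psi(0)=1\), and \(|\chi_{A_q}(a,b)|=q^{-1/2}\) for \(a\ne0\). We also need \(0<t_q<1/2\). This is immediate, since \(t_q=1/(\sqrt{4+\sqrt q}+2)<1/4\).

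The first step handles the axis eigenvalues. For \(b\ne0\), Eq.~\eqref{eq:repair-axis} gives \(q|\chi_{\phi_q}(0,b)|^2=(\sqrt q\,t_q^2+2t_q)^2/N_q^2\). The crossing relation \(\sqrt q\,t_q^2+4t_q=1\) rewrites the numerator as \(\sqrt q\,t_q^2+2t_q=1-2t_q\). Hence each of the \(q-1\) axis eigenvalues equals exactly \(L_q\) by Eq.~\eqref{eq:field-spectrum}. This gives the multiplicity claim.

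The second step handles the off-axis eigenvalues. For \(a\ne0\), squaring Eq.~\eqref{eq:repair-off-axis} and multiplying by \(q\) places every such eigenvalue in \([L_q,U_q]\). Together with the first step, the whole nonidentity spectrum lies in \([L_q,U_q]\), since \(L_q\le U_q\) trivially. The minimum is attained on the axis, so \(\lambda(\phi_q)=L_q\), which is Eq.~\eqref{eq:exact-L}.

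The third step proves monotonicity. I would eliminate \(q\) in favour of \(t=t_q\), which is a strictly decreasing bijection from real \(q>0\) onto \(t\in(0,1/4)\). From the crossing relation, \(q^{-1/2}=t^2/(1-4t)\), so
\[
 N_q=1+t^2+\frac{2t^3}{1-4t}
 =\frac{(1+t^2)(1-4t)+2t^3}{1-4t}.
\]
Therefore
\[
 \sqrt{L_q}=F(t):=\frac{(1-2t)(1-4t)}{1-4t+t^2-2t^3}.
\]
It then suffices to show \(F'(t)<0\) on \((0,1/4)\). The sign of \(F'\) is that of a cubic-type polynomial obtained from the quotient rule, and I expect this to be the main obstacle. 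I would first try to show that the numerator \(P(t)=(8t-6)D(t)-(1-2t)(1-4t)D'(t)\), where \(D(t)=1-4t+t^2-2t^3\), is negative on \([0,1/4]\). At \(t=0\) it equals \(-6+4=-2\). I would expand \(P\), bound it termwise on the interval, and if necessary check its sign on a few subintervals. A cleaner alternative is to show directly that \(\log(1-2t)+\log(1-4t)-\log D(t)\) has negative derivative, by comparing \(2/(1-2t)+4/(1-4t)\) with \(|D'|/D\).

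Once monotonicity is established, the smallest admissible \(q\) is \(5\), so \(L_q\ge L_5\). The last step is to evaluate \(t_5=1/(\sqrt{4+\sqrt5}+2)\) and \(L_5\) numerically, which should give \(0.19786\ldots\).
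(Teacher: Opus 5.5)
Your first two steps are the paper's argument. At the crossing \(t_q\), the axis value \((\sqrt q\,t_q^2+2t_q)^2/N_q^2\) equals \(L_q\) on all \(q-1\) labels \((0,b)\). Eq.~\eqref{eq:repair-off-axis} then places every off-axis eigenvalue in \([L_q,U_q]\). You also check \(t_q<1/2\), which the paper leaves implicit.

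The difference is the monotonicity step. The paper sets \(s=\sqrt q\) and \(a_q=1-2t_q\), and uses the balance equation to get \(N_q=1+a_q/s\). Hence \(L_q=(a_qs/(a_q+s))^2\), i.e.\ \(L_q^{-1/2}=1/a_q+1/s\). Both \(a_q\) and \(s\) increase with \(q\), so strict monotonicity is immediate without calculus.

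Your route through \(F(t)\) is valid but heavier. As written it is unfinished, and it contains a slip: \(\frac{d}{dt}[(1-2t)(1-4t)]=16t-6\), not \(8t-6\). With the correct derivative, the quotient-rule numerator is
\(P(t)=-2+14t-20t^2-24t^3+16t^4\).
On \([0,1/4]\) you have \(P''(t)\le-40+192/16<0\) and \(P'(1/4)=1/2>0\). So \(P\) increases to \(P(1/4)=-1/16<0\), giving \(F'<0\).

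Naive termwise bounds do not close this. Dropping the negative terms and evaluating the positive ones at \(t=1/4\) gives \(-2+7/2+1/16>0\), because the margin near \(t=1/4\) (i.e.\ \(q\to0^+\)) is small.

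The simplest fix is already in your algebra. Your quotient \(D/N\) splits as
\(1/F(t)=\frac{1}{1-2t}+\frac{t^2}{1-4t}\),
a sum of two strictly increasing functions on \((0,1/4)\). By the balance relation \(t^2/(1-4t)=q^{-1/2}\), this is exactly the paper's identity \(L_q^{-1/2}=1/a_q+1/\sqrt q\), written in the variable \(t\). Using it removes what you identified as the main obstacle.
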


\begin{proof}
At the balanced value, Eqs.~\eqref{eq:repair-axis} and \eqref{eq:balanced-t} give
\begin{equation}
 q|\chi_{\phi_q}(0,b)|^2=L_q,
 \qquad b\ne0.
 \label{eq:axis-attains}
\end{equation}
Equation~\eqref{eq:repair-off-axis} places every off-axis nonidentity eigenvalue in \([L_q,U_q]\).  Hence the axis attains the global minimum and gives its multiplicity.

For monotonicity, put \(s=\sqrt q\) and \(a_q=1-2t_q\).  The balance equation implies
\begin{equation}
 N_q=1+\frac{a_q}{s},
 \qquad
 L_q=\left(\frac{a_qs}{a_q+s}\right)^2.
 \label{eq:L-monotone-form}
\end{equation}
Both \(a_q\) and \(s\) increase with \(q\), while \(as/(a+s)\) increases in each positive argument.  The smallest allowed dimension is \(q=5\).
\end{proof}

The lower bound \(L_5\) makes the family uniformly spectrally stable.  The simultaneous confinement to \([L_q,U_q]\) is stronger: because \(U_q/L_q\to1\), every nonidentity direction becomes asymptotically equivalent.  The resulting \(\eta(\phi_q)\to1\) already reaches the SIC-normalized endpoint asymptotically; the next subsection shows that the attained floor also approaches the global finite-field WH max--min optimum without assuming SIC existence.  Appendix~\ref{app:finite-field} gives a closed formula for every off-axis eigenvalue, including its finite-field Gauss phase.

\subsection{Asymptotic optimality for the finite-field WH max--min metric}

As \(q\to\infty\) through prime powers of characteristic \(p\ge5\), writing \(x=q^{-1/4}\) in the exact formulas gives
\begin{align}
 t_q&=q^{-1/4}-2q^{-1/2}+O(q^{-3/4}),
 \label{eq:t-asymptotic}\\
 L_q&=1-4q^{-1/4}+10q^{-1/2}+O(q^{-3/4}),
 \label{eq:L-asymptotic}\\
 U_q&=1+4q^{-1/4}-6q^{-1/2}+O(q^{-3/4}).
 \label{eq:U-asymptotic}
\end{align}
Thus
\begin{equation}
 \eta(\phi_q)=\frac{q+1}{q}L_q\longrightarrow1.
 \label{eq:eta-to-one}
\end{equation}
More intrinsically, let
\begin{equation}
 \Lambda_q^\star=\max_{\norm\phi=1}\lambda(\phi)
 \label{eq:global-optimum}
\end{equation}
for the finite-field WH group.  Proposition~\ref{prop:SIC-endpoint} and Theorem~\ref{thm:main-Alltop} give
\begin{align}
 L_q&\le\Lambda_q^\star\le\frac q{q+1},
 \notag\\[-1mm]
 \frac{q+1}{q}L_q
 &\le\frac{L_q}{\Lambda_q^\star}\le1,
 \qquad
\frac{L_q}{\Lambda_q^\star}\longrightarrow1.
 \label{eq:global-squeeze}
\end{align}
No SIC-existence assumption enters this squeeze.  It also gives the quantitative gap
\begin{equation}
 0\le\Lambda_q^\star-L_q
 \le\frac q{q+1}-L_q
 =4q^{-1/4}+O(q^{-1/2}).
 \label{eq:global-gap}
\end{equation}

At finite \(q\), \(\phi_q\) is not a SIC.  To see this directly, set \(s=\sqrt q\) and \(a=a_q\in(0,1)\).  Equation~\eqref{eq:L-monotone-form} gives
\begin{equation}
 L_q=\frac{a^2s^2}{(a+s)^2}
 <\frac{s^2}{s^2+1}=\frac q{q+1},
 \label{eq:not-SIC}
\end{equation}
where the strict inequality is equivalent to \(a^2s<s+2a\), which follows from \(a^2<1\).

The same interval controls pairwise orbit geometry.  For distinct \(u,v\),
\begin{equation}
 \frac{L_q}{q}\le\Tr(\Pi_u\Pi_v)\le\frac{U_q}{q},
 \label{eq:pairwise-bounds}
\end{equation}
and hence
\begin{equation}
 \max_{u\ne v}\abs{(q+1)\Tr(\Pi_u\Pi_v)-1}
 =O(q^{-1/4}).
 \label{eq:approx-equiangular}
\end{equation}
The traceless-sector condition number satisfies
\begin{equation}
 \kappa_{\mathrm{tr}}(G_{\phi_q})
 \le\frac{U_q}{L_q}
 =\left(\frac{1+2t_q}{1-2t_q}\right)^2
 =1+8q^{-1/4}+O(q^{-1/2}).
 \label{eq:condition-number}
\end{equation}
Thus the nonidentity spectrum becomes asymptotically isotropic.  More operationally, every traceless Hermitian local direction at \(I/q\) obeys
\begin{equation}
 \frac{L_q}{q}
 \le \frac{I_C}{I_Q}
 \le \frac{U_q}{q}.
 \label{eq:Alltop-Fisher}
\end{equation}
After normalization by the SIC directional value \(1/(q+1)\), both endpoints tend to one, and their ratio tends to one.  Uniform stability here is SIC-normalized: the physical scaled-frame minimum is \(L_q/q\), so \(\norm{\cM_{\phi_q}^{-1}}_{2\to2}=q/L_q=\Theta(q)\), the same unavoidable dimensional scaling as the SIC value \(d+1\).  The gain is a dimension-independent relative factor and vanishing traceless anisotropy.  Finally, Corollary~\ref{cor:shadow} gives
\begin{equation}
 \sup_\rho\E_\rho[\widehat o^2]
 \le\frac q{L_q}\Tr(O_0^2)
 \le\frac q{L_5}\Tr(O_0^2),
 \label{eq:Alltop-shadow}
\end{equation}
and the ratio of this universal spectral bound to the SIC bound tends to one.

\section{Numerical benchmarks and reproducibility}
\label{sec:numerics}

Figure~\ref{fig:story} uses the common projector-Gram normalization \(\eta=(d+1)\lambda/d\).  The conceptual panel separates pointwise IC, uniform stability, and the SIC endpoint; the data panels strictly separate cyclic \(\Z_d^2\) and finite-field \(\F_q^2\) phase spaces.

All WH characteristic coefficients in the cyclic benchmarks are evaluated by FFT-based cyclic correlation in \(O(d^2\log d)\) time.  Figure~\ref{fig:story}(b) uses 2000 Haar fiducials in each dimension \(2\le d\le50\), and Fig.~\ref{fig:operational-MSE} uses 4000 in each displayed prime dimension.  The implementation is independently validated against direct Gram diagonalization, the Moyal identity, the analytic geometric zero labels, and the closed Alltop spectra.  Reproducibility details are given in Appendix~\ref{app:numerics}.

The BIC paper selects no preferred geometric parameter, so the pink choice is only a reproducible representative; the dashed curve is an exponentially decaying upper envelope for the explicit sufficient parameter region.  In contrast, the balanced-Alltop formula rises from \(\eta_5\approx0.2374\) to \(\eta_{9973}\approx0.6873\); the theorem proves convergence to one at the slow rate \(q^{-1/4}\).

Figure~\ref{fig:operational-MSE} evaluates Corollary~\ref{cor:linear-MSE} analytically from the complete spectrum; the Haar band describes variation across fiducials rather than Monte Carlo measurement-shot error.  This is a fixed-state comparison for canonical unbiased inversion at
\(I/q\), rather than a state-uniform or estimator-optimal statement.
No numerical optimization enters Theorem~\ref{thm:main-Alltop}.
The characteristic-two spectrum in Theorem~\ref{thm:char-two} is likewise
an exact analytic result and is not inferred from the plotted data.

\section{Discussion}
\label{sec:discussion}

Minimal WH measurement design has a natural spectral objective,
\begin{equation}
 \max_{\norm\phi=1}\min_{u\ne0}|\chi_\phi(u)|^2.
 \label{eq:discussion-max-min}
\end{equation}
The Moyal identity fixes the average nonidentity spectral weight, and a
WH SIC is exactly the flat-spectrum maximizer.  Balanced Alltop reaches
this endpoint asymptotically in a stronger sense than convergence of its
minimum alone: the full nonidentity interval collapses,
\(U_q/L_q\to1\), while
\(\lambda(\phi_q)/\Lambda_q^\star\to1\).  It therefore gives an explicit
asymptotically optimal solution to the finite-field WH worst-direction
problem without assuming that an exact SIC exists.  This
spectral objective complements approximate-SIC criteria based on
pairwise coherence: those criteria control individual orbit overlaps,
whereas the Gram edge identifies the least resolved operator direction.

The balanced perturbation also suggests a simple design principle.
The unperturbed Alltop state already has an almost ideal flat ambiguity
profile except for a missing phase-space axis; a single-coordinate
perturbation repairs that zero set while only weakly distorting the
remaining spectrum.  Lemma~\ref{lem:repair} isolates this spectral
zero-set repair mechanism in a form that may apply to other structured
phase states, with broader Alltop functions providing natural
candidates \cite{Hall2013}.  The competing scales reveal why the
construction is balanced at \(t_q\asymp q^{-1/4}\): the repaired-axis
amplitude is \(t^2+2t/\sqrt q\), with the quadratic term dominant at this
scale, while the distortion of the flat bulk is linear in \(t\).

The remaining boundaries are concrete.  The finite-field phase space
\(\F_q^2\) differs from \(\Z_q^2\) when \(q=p^r\) with \(r>1\), so the
prime-power results do not automatically extend the cyclic construction.
A uniformly stable finite-field construction in characteristic three
remains open.  The characteristic-two fiducial
covers every multi-qubit Hilbert space but by itself supplies no efficient
implementation circuit.  Balanced Alltop is not an exact SIC at finite
\(q\), and its chosen spike optimizes the certified analytic bound rather
than a proved finite-\(q\) global objective.  It is also natural to ask
whether overcomplete or augmented frames improve the canonical-estimator
guarantees \cite{Innocenti2023,Fischer2024}, and whether the present
single-frame spectrum predicts robustness of BIC-based certification away
from the ideal maximal-violation point.

\section{Conclusion}

Informational completeness says whether inversion exists; the weakest
projector-Gram eigenvalue says how well that inverse resolves its hardest
operator direction.  WH covariance makes this quantity explicit and
places the SIC at its max--min endpoint.

Explicit minimal measurements can then be organized by increasing
stability: polynomial floors in every integer dimension, a uniform floor
for every multi-qubit dimension, and, in finite-field dimensions of
characteristic at least five, a balanced-Alltop spectrum that becomes
isotropic and approaches the global finite-field WH max--min optimum
without assuming SIC existence.

This spectrum controls canonical inverse amplification and shadow bounds,
fixes the weakest local Fisher direction at \(I/d\), and determines the
exact finite-sample Hilbert--Schmidt MSE of canonical linear inversion
there.  Minimal IC is therefore the starting point; spectral design
determines whether the measurement remains statistically useful as the
dimension grows.

\begin{acknowledgments}
This work was supported by the National Natural Science Foundation of China under Grants No.~42330707 and No.~42530108, and by the Beijing Natural Science Foundation under Grant No.~Z220002.

OpenAI GPT-5.6 Sol was used as an assistive tool during exploratory
mathematical reasoning, including the exploration of candidate
constructions, derivation checking, numerical verification, and
manuscript editing.  All mathematical claims, proofs, and numerical
results reported here were checked by the authors, who take full
responsibility for the content of the manuscript.
\end{acknowledgments}

\appendix

\section{Labelled cyclic WH spectrum}
\label{app:spectrum}

\subsection{Route A: Fourier/circulant diagonalization}

Equation~\eqref{eq:projector-Gram} depends only on the phase-space difference \(v-u\), so \(G_\phi\) is a two-dimensional circulant matrix.  Substitution of a character gives
\begin{equation}
 G_\phi f_{m,n}
 =\left(\sum_{a,b}g_\phi(a,b)\omega^{ma+nb}\right)f_{m,n}.
 \label{eq:appendix-Fourier-route}
\end{equation}
The \(d^2\) characters are orthonormal and complete, while Eq.~\eqref{eq:ambiguity-self-Fourier} identifies the coefficient in parentheses as \(d|\chi_\phi(-n,m)|^2\).  This route makes the Fourier eigendirections immediate.

\subsection{Route B: synthesis-operator reduction}

Let \(T_\phi:\C^{\Z_d^2}\to\mathcal L(\C^d)\) be the synthesis operator
\begin{equation}
 T_\phi c=\sum_{a,b}c(a,b)\Pi_{a,b}.
 \label{eq:synthesis}
\end{equation}
Then \(G_\phi=T_\phi^\dagger T_\phi\).  The normalized displacements \(d^{-1/2}D_{p,q}\) form an orthonormal operator basis, and
\begin{equation}
 D_{a,b}D_{p,q}D_{a,b}^\dagger
 =\omega^{bp-aq}D_{p,q}.
 \label{eq:cyclic-conjugation}
\end{equation}
Expanding \(\Pi_{0,0}\) in this basis and inserting Eq.~\eqref{eq:phase-character} gives
\begin{align}
 T_\phi f_{m,n}
 &=\frac1d\sum_{a,b}\omega^{ma+nb}
 D_{a,b}\Pi_{0,0}D_{a,b}^\dagger
 \notag\\
 &=\overline{\chi_\phi(-n,m)}D_{-n,m}.
 \label{eq:synthesis-character}
\end{align}
The character sum selects exactly the displacement \((-n,m)\), including when its coefficient vanishes.  For distinct Fourier labels, Weyl orthogonality gives explicitly
\begin{equation}
 \left\langle T_\phi f_{m,n},
 T_\phi f_{m',n'}\right\rangle_{\HS}=0,
 \qquad (m,n)\ne(m',n'),
 \label{eq:synthesis-image-orthogonality}
\end{equation}
so \(T_\phi^\dagger T_\phi\) is diagonal in the character basis.  Therefore
\begin{equation}
 \langle f_{m,n},G_\phi f_{m,n}\rangle
 =\norm{T_\phi f_{m,n}}_{\HS}^2
 =d|\chi_\phi(-n,m)|^2,
 \label{eq:synthesis-norm}
\end{equation}
which proves Eq.~\eqref{eq:spectrum-explicit}.  Route A explains why convolution selects Fourier characters; Route B explains why the corresponding DFT eigenvalues reduce to the ambiguity intensities themselves, including when a coefficient vanishes.

The same operator expansion gives
\begin{align}
 \cS_\phi(D_{p,q})
 &=d|\chi_\phi(p,q)|^2D_{p,q},\notag\\
 \cM_\phi(D_{p,q})
 &=|\chi_\phi(p,q)|^2D_{p,q},
 \label{eq:channel-spectrum-app}
\end{align}
and setting \(A=\Pi_{0,0}\) in Parseval's identity gives the Moyal sum \eqref{eq:Moyal}.

\section{Canonical second moments}
\label{app:shadow}

Let \(A=\cM_\phi^{-1}(O_0)\).  Since \(\cM_\phi\) is self-adjoint, has a positive real spectrum under IC, and preserves Hermitian operators, its inverse is self-adjoint and Hermiticity-preserving.  Consequently,
\begin{equation}
 \widehat o_u
 =\Tr[O_0\cM_\phi^{-1}(\Pi_u)]
 =\Tr(A\Pi_u)\in\mathbb R.
 \label{eq:shadow-value-app}
\end{equation}
Using Eq.~\eqref{eq:frame-channel},
\begin{align}
 \sum_u\widehat o_u^2
 &=\langle A,\cS_\phi(A)\rangle_{\HS}
 =d\langle O_0,\cM_\phi^{-1}(O_0)\rangle_{\HS}
 \notag\\
 &\le\frac{d^2}{\lambda(\phi)}\Tr(O_0^2).
 \label{eq:shadow-sum-app}
\end{align}
For an arbitrary state, the outcome probability satisfies
\(
p_u=\Tr(E_u\rho)\le1/d.
\)
Multiplying Eq.~\eqref{eq:shadow-sum-app} by this upper bound proves Eq.~\eqref{eq:shadow-worst}.  At \(\rho=I/d\), every outcome has probability \(1/d^2\), yielding the equality in Eq.~\eqref{eq:shadow-mixed}.

If a nonidentity projector-Gram eigenvalue is \(\lambda_v\), the corresponding eigenvalue of \(\cM_\phi^{-1}\) is \(d/\lambda_v\).  Expanding \(O_0\) in the orthonormal Weyl operator basis therefore bounds the exact mixed-input second moment between \(\Tr(O_0^2)/U\) and \(\Tr(O_0^2)/L\).

\section{Haar inverse stability}
\label{app:Haar}

Only the odd-dimensional density argument in Theorem~\ref{thm:Haar} requires elaboration.  Let
\begin{equation}
 \Delta=\{p\in\mathbb R^d:p_j\ge0,\ \textstyle\sum_jp_j=1\}
 \label{eq:simplex}
\end{equation}
with its uniform \((d-1)\)-dimensional measure, and write \(v_j=(\Re\omega^j,\Im\omega^j)\).  On the affine hyperplane
\(
H=\{p:\sum_jp_j=1\}
\)
define
\begin{equation}
 Lp=\sum_jp_jv_j\in\mathbb R^2.
 \label{eq:polygon-map}
\end{equation}
The uniform point \(p_*=(1/d,\ldots,1/d)\) lies in the interior of \(\Delta\) and satisfies \(Lp_*=0\).  On the translation space \(H_0=\{x:\sum_jx_j=0\}\), the restriction of \(L\) has rank two because the regular polygon is not contained in an affine line.

For \(d=3\), \(L\) is an affine isomorphism from \(\Delta\) onto a triangle, so the density of \(Y=Lp\) is constant and positive near zero.  For odd \(d\ge5\), choose a bounded right inverse \(R:\mathbb R^2\to H_0\) and let \(K=\ker(L|_{H_0})\), of dimension \(d-3\).  Since \(p_*\) is an interior point, there exist \(\delta,\epsilon>0\) such that
\begin{equation}
 p_*+Ry+k\in\Delta
 \quad\text{whenever }|y|<\delta,
 \quad k\in K,\quad |k|<\epsilon.
 \label{eq:fiber-ball}
\end{equation}
The linear coarea formula gives the density
\begin{equation}
 f_Y(y)=\frac{C_d}{J_L}
 \mathcal H^{d-3}\!\left(\Delta\cap L^{-1}(y)\right),
 \label{eq:coarea-density}
\end{equation}
where \(J_L>0\) is the constant two-dimensional Jacobian and \(C_d\) normalizes the simplex measure.  Equation~\eqref{eq:fiber-ball} implies \(f_Y(y)\ge c_d>0\) for \(|y|<\delta\).  Hence
\begin{equation}
 \E|Y|^{-2}
 \ge c_d\int_{|y|<\delta}|y|^{-2}\,d^2y
 =2\pi c_d\int_0^\delta\frac{dr}{r}=\infty.
 \label{eq:odd-divergence}
\end{equation}
This numerical-shadow viewpoint is consistent with the density framework of Ref.~\cite{Dunkl2011}.  Since \(\lambda(\phi)\le d|Y|^2\), Eq.~\eqref{eq:Haar-divergence} follows.

For completeness, in even dimension
\begin{equation}
 B=\sum_{j\,\mathrm{even}}p_j
 \sim\operatorname{Beta}(d/2,d/2),
 \qquad
 \bra\phi Z^{d/2}\ket\phi=2B-1.
 \label{eq:Beta-app}
\end{equation}
The beta density is continuous and strictly positive at \(B=1/2\), which directly gives the same inverse-square divergence.

\section{Structured zeros and arbitrary-dimensional cyclic constructions}
\label{app:obstructions}

The exact spectrum also gives several useful structural diagnostics.  Write \(\ket\phi=\sum_xc_x\ket x\) and \(S=\{x:c_x\ne0\}\).

\begin{proposition}[Structured obstructions]
\label{prop:obstructions}
Each condition below forces a nonidentity zero of \(\chi_\phi\):
\begin{enumerate}
 \item \(d\) is even and every \(c_x\) is real;
 \item the coordinate probabilities are flat, \(|c_x|^2=1/d\);
 \item \(S-S\ne\Z_d\), in particular if
 \begin{equation}
  |S|<\frac{1+\sqrt{4d-3}}2;
  \label{eq:support-threshold}
 \end{equation}
 \item \(\phi\) is a stabilizer state for the same WH/Clifford structure.
\end{enumerate}
\end{proposition}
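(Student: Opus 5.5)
By Eq.~\eqref{eq:IC-iff}, it suffices in each case to exhibit an explicit \(u\ne0\) with \(\chi_\phi(u)=\bra\phi X^aZ^b\ket\phi=0\). I will use the coordinate formula
\[
\chi_\phi(a,b)=\sum_x \overline{c_{x+a}}\,c_x\,\omega^{bx},
\]
valid up to the immaterial sign convention for the shift. Each case is then checked separately.

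\emph{Flat probabilities (item 2).} For \(a=0\) we get \(\chi_\phi(0,b)=\sum_x|c_x|^2\omega^{bx}=\frac1d\sum_x\omega^{bx}=0\) whenever \(b\ne0\). This is the same mechanism already used in Lemma~\ref{lem:repair}.

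\emph{Sparse support (item 3).} Pick \(a\notin S-S\). Then \(c_{x+a}c_x=0\) for every \(x\), so \(\chi_\phi(a,b)=0\) for all \(b\). Since \(0\in S-S\), this \(a\) is nonzero. For the threshold, note that \(|S-S|\le |S|(|S|-1)+1\). If \(|S|^2-|S|+1<d\), then \(S-S\ne\Z_d\). Solving this quadratic inequality gives exactly Eq.~\eqref{eq:support-threshold}.

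\emph{Real amplitudes in even dimension (item 1).} Take \(a=d/2\) and \(b=d/2\), so that \(\omega^{bx}=(-1)^x\). Write
\[
\chi=\sum_x c_{x+d/2}\,c_x\,(-1)^x .
\]
Substituting \(x\mapsto x+d/2\) maps the term at \(x\) to \(c_xc_{x+d/2}(-1)^{x+d/2}\). Hence \(\chi=(-1)^{d/2}\chi\), which forces \(\chi=0\) if \(d/2\) is odd. If \(d/2\) is even this trick fails, so I would instead use realness via complex conjugation. Reality gives \(\overline{\chi(a,b)}=\chi(a,-b)\). The general covariance relation \(\chi(-a,-b)=\overline{\chi(a,b)}\) holds up to a phase \(\omega^{ab}\) coming from reordering \(X^{-a}Z^{-b}\). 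Combining the two gives \(\chi(a,b)=\omega^{\pm ab}\chi(-a,b)\) up to that phase. At \(a=d/2\) we have \(-a=a\), so \(\chi(d/2,b)(1-\omega^{\pm bd/2})=0\). Choosing \(b\) odd makes the factor \(1-(-1)^b=2\ne0\), hence \(\chi(d/2,b)=0\). This covers both subcases at once, so I would present this version. The only step needing care is tracking the exact reordering phase in the displacement convention \(D_u=X^aZ^b\).

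\emph{Stabilizer states (item 4).} A stabilizer state is an eigenvector of \(d\) commuting displacements forming a Lagrangian subgroup \(M\). For \(v\notin M\), choose \(w\in M\) whose commutation phase with \(D_v\) is nontrivial, and write \(D_w\ket\phi=e^{i\alpha}\ket\phi\). Then
\[
\chi(v)=\bra\phi D_w^\dagger D_vD_w\ket\phi=\omega^{\langle v,w\rangle}\chi(v),
\]
so \(\chi(v)=0\). Since \(M\ne\Z_d^2\), there are \(d^2-d\ge1\) such zeros. In nonprime \(d\) I would define ``stabilizer'' as a joint eigenvector of a maximal isotropic subgroup of order \(d\), and the argument is unchanged. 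The main obstacle overall is the sign and phase bookkeeping in item 1; the other items are essentially one-line character or commutation arguments.
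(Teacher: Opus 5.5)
Your proof is correct. Items 2 and 3 are exactly the paper's arguments; items 1 and 4 take different routes.

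\textbf{Item 1.} The paper simply pairs \(x\) with \(x+d/2\). For real coefficients the shift \(x\mapsto x+d/2\) sends \(\chi_\phi(d/2,b)\) to \((-1)^b\chi_\phi(d/2,b)\), so every odd \(b\) gives a zero. Your first attempt was this pairing, and it failed only because you fixed \(b=d/2\), which is even when \(4\mid d\). Your conjugation route is also valid. Precisely, \(\overline{\chi(a,b)}=\omega^{ab}\chi(-a,-b)\), so real coefficients give \(\chi(a,b)=\omega^{-ab}\chi(-a,b)\). Once the phases are tracked, however, this is the same substitution in disguise, and it lands on the same zeros \((d/2,b)\) with \(b\) odd.

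\textbf{Item 4.} The paper argues globally. Being an eigenvector of the \(d\) displacements indexed by \(M\) gives \(|\chi_\phi|=1\) on \(M\). These \(d\) unit terms already exhaust the Moyal sum \(\sum_u|\chi_\phi(u)|^2=d\), which forces \(\chi_\phi=0\) off \(M\). Your commutation-phase argument is local and also correct, but it relies on \(M^\perp=M\): every \(v\notin M\) must have a nontrivial commutation phase with some \(w\in M\). For prime \(d\) this is linear algebra. For composite \(d\) it follows from \(M\subseteq M^\perp\) together with \(|M^\perp|=d^2/|M|=d\), which comes from nondegeneracy of the symplectic pairing on \(\Z_d^2\). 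You should state this rather than say ``the argument is unchanged.''

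The Moyal route needs no isotropy bookkeeping at all; it uses only \(d\) distinct displacements having \(\phi\) as an eigenvector. Your route instead explains each individual zero as a symmetry cancellation.
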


\begin{proof}
For item 1, pair \(x\) and \(x+d/2\) in
\(
\chi_\phi(d/2,b)=\sum_x\overline{c_{x+d/2}}c_x\omega^{bx};
\)
the two terms cancel for odd \(b\).  For item 2,
\(
\chi_\phi(0,b)=d^{-1}\sum_x\omega^{bx}=0
\)
when \(b\ne0\).  For item 3, if \(a\notin S-S\), every term of \(\chi_\phi(a,b)\) vanishes.  The sufficient threshold follows from \(|S-S|\le |S|(|S|-1)+1\).  Finally, a stabilizer state has unit-modulus characteristic coefficients on a size-\(d\) stabilizer subgroup.  Those terms exhaust the Moyal sum, so every coefficient outside the subgroup vanishes.
\end{proof}

The strict inequality in Eq.~\eqref{eq:support-threshold} is essential, since a difference set can saturate the counting bound.

\subsection{Parity constructions and polynomial stability}

We now prove Proposition~\ref{prop:universal-cyclic} directly from the ambiguity function.  For odd \(d\ge3\), take the missing coordinate to be zero.  The unnormalized cyclic correlations of \(c_0=0\) and \(c_x=1\) for \(x\ne0\) are
\begin{equation}
 A_{a,b}:=\sum_x\overline{c_{x+a}}c_x\omega^{bx}
 =\begin{cases}
 d-1,&a=b=0,\\
 -1,&a=0,\ b\ne0,\\
 d-2,&a\ne0,\ b=0,\\
 -(1+\omega^{-ab}),&a\ne0,\ b\ne0.
 \end{cases}
 \label{eq:odd-ambiguity}
\end{equation}
Division by the squared norm \(d-1\), followed by Eq.~\eqref{eq:spectrum-multiset}, gives \(\mu_{a,b}:=d|A_{a,b}|^2/(d-1)^2\) and the displacement-labelled spectral values
\begin{equation}
 \mu_{a,b}=\begin{cases}
 d/(d-1)^2,&a=0,\ b\ne0,\\
 d(d-2)^2/(d-1)^2,&a\ne0,\ b=0,\\
 d[2+2\cos(2\pi ab/d)]/(d-1)^2,
 &a\ne0,\ b\ne0.
 \end{cases}
 \label{eq:odd-spectrum}
\end{equation}
where the Fourier-character label of Theorem~\ref{thm:WH-spectrum} is \(\lambda_{m,n}=\mu_{-n,m}\).
Oddness prevents a zero in the last line, and \(a=1\), \(b=(d\pm1)/2\) attains its minimum.  Hence
\begin{equation}
 \lambda(\phi_d^{\rm cyc})
 =\frac{2d}{(d-1)^2}\left(1-\cos\frac\pi d\right)
 \sim\frac{\pi^2}{d^3}.
 \label{eq:odd-min}
\end{equation}
The inequality \(1-\cos(\pi/d)\ge2/d^2\) also gives \(\eta(\phi_d^{\rm cyc})\ge4d^{-3}\).  The case \(d=3\) is a SIC.

For even \(d\), let \(c_x=(1+\delta_{x0})\zeta^x\), with \(\zeta=e^{2\pi i/(d+1)}\).  Thus \(\|c\|^2=d+3\); this special coefficient at zero is essential.  Put \(\sigma=\zeta^{-1}=\zeta^d\) and \(r=\omega^b\).  Splitting the correlation at the cyclic wrap gives the exact formula
\begin{equation}
 \begin{aligned}
 A_{0,0}&=d+3,\\
 A_{0,b}&=3, &&b\ne0,\\
 A_{a,0}&=\zeta^{-a}[d-a+1+(a+1)\sigma],
 &&a\ne0,\\
 A_{a,b}&=\zeta^{-a}\left[
 1+\sigma r^{-a}+\frac{(1-\sigma)(1-r^{-a})}{1-r}
 \right], &&a,b\ne0.
 \end{aligned}
 \label{eq:even-ambiguity}
\end{equation}
Consequently, the exact displacement-labelled spectral values are
\begin{equation}
 \begin{aligned}
 \mu_{a,b}&:=d|\chi_{\phi_d^{\rm cyc}}(a,b)|^2
 =\frac{d}{(d+3)^2}|A_{a,b}|^2,\\
 \lambda_{m,n}&=\mu_{-n,m}.
 \end{aligned}
 \label{eq:even-labelled-spectrum}
\end{equation}
For the off-axis labels define
\begin{equation}
 h=\frac\pi{d+1},\qquad
 k=\frac{\pi b}{d},\qquad
 T=\frac{\pi ab}{d}.
\end{equation}
Removing an overall phase from the last line of Eq.~\eqref{eq:even-ambiguity} yields
\begin{equation}
 \begin{aligned}
 \mu_{a,b}&=\frac{4d}{(d+3)^2}
 \left\{[\cos(T+h)-\sin T\sin h]^2\right.\\
 &\hspace{25mm}\left.
 +\sin^2T\sin^2h\cot^2k\right\}.
 \end{aligned}
 \label{eq:even-offaxis-spectrum}
\end{equation}

We next bound this expression uniformly away from zero.  Assume first that even \(d\ge4\), reduce \(T\) modulo \(\pi\) to \(t\in[0,\pi)\), and write the two real components of half the unnormalized off-axis overlap, up to a common sign, as
\begin{equation}
 R=\cos t\cos h-2\sin t\sin h,
 \qquad I=-\sin t\sin h\cot k.
 \label{eq:even-RI}
\end{equation}
If \(b\ne d/2\) and \(\sin t\ge1/2\), then
\(
 |\cot k|\ge\tan(\pi/d)\ge2/d
\)
and \(\sin h\ge2/(d+1)\), so
\begin{equation}
 |A_{a,b}|=2\sqrt{R^2+I^2}
 \ge\frac4{d(d+1)}.
 \label{eq:even-overlap-lower}
\end{equation}
If \(\sin t<1/2\), then \(t<\pi/6\) or \(t>5\pi/6\).  In the first interval,
\begin{equation}
 R\ge\frac{\sqrt3}{2}\cos\frac\pi5-\sin\frac\pi5
 >\frac1{10},
\end{equation}
while in the second interval the two terms in \(R\) have the same sign and \(|R|>1/10\).  Equation~\eqref{eq:even-overlap-lower} follows again.  For the exceptional label \(b=d/2\), Eq.~\eqref{eq:even-offaxis-spectrum} gives \(|A_{a,b}|=2\cos h\) for even \(a\) and \(4\sin h\) for odd \(a\), both larger than the same bound.  Finally, the two axes obey \(|A_{0,b}|=3\) and
\(
 |A_{a,0}|\ge(a+1)\sin(2h)\ge8/(d+1).
\)
For \(d=2\), direct evaluation gives \(\lambda=8/25\).  Thus every even dimension satisfies
\begin{equation}
 \lambda(\phi_d^{\rm cyc})
 \ge\frac{16}{d(d+1)^2(d+3)^2}.
 \label{eq:even-lower-bound}
\end{equation}

This fifth-power scale is sharp for the even branch.  For even \(d\ge6\), take \(a=1\), \(b=d/2-2\), and set
\begin{equation}
 \epsilon=\frac{2\pi}{d},\qquad
 \delta=\epsilon-2h=\frac{2\pi}{d(d+1)}.
\end{equation}
Then \(t=k=\pi/2-\epsilon\), and Eq.~\eqref{eq:even-RI} becomes
\begin{equation}
 R=\sin(h+\delta)-\cos\epsilon\sin h,
 \qquad |I|=\sin\epsilon\sin h.
\end{equation}
Using \(1-\cos\epsilon\le\epsilon^2/2\) and the Lipschitz bound for sine gives \(|R|,|I|\le2\pi^2/d^2\).  Hence
\begin{equation}
 \lambda(\phi_d^{\rm cyc})
 \le\frac{32\pi^4}{d^5};
 \label{eq:even-upper-bound}
\end{equation}
for \(d=2,4\) the same bound follows trivially from \(\lambda\le d/(d+1)\).  Equations~\eqref{eq:even-lower-bound} and \eqref{eq:even-upper-bound} prove \(\lambda=\Theta(d^{-5})\) along the even dimensions.  Moreover,
\begin{equation}
 \eta(\phi_d^{\rm cyc})
 \ge\frac{16}{d^2(d+1)(d+3)^2}
 \ge\frac{128}{75}\,d^{-5},
\end{equation}
where the last step uses \(d+1\le3d/2\) and \(d+3\le5d/2\).  The odd bound is stronger, completing the proof of Proposition~\ref{prop:universal-cyclic}.

\section{Complete geometric-family analysis}
\label{app:geometric}

Let \(z=\rho^2\omega^b\).  Splitting the defining overlap at the cyclic wrap gives
\begin{align}
 \chi_\alpha(a,b)
 &=\frac{c_d^2}{1-z}\left[
 \overline\alpha^{\,a}(1-z^{d-a})
 +\overline\alpha^{\,a-d}z^{d-a}(1-z^a)
 \right]
 \label{eq:geometric-chi-one}\\
 &=\frac{c_d^2\overline\alpha^{\,a-d}}{1-z}
 \left[\overline\alpha^{\,d}-\rho^{2d}
 +(1-\overline\alpha^{\,d})z^{d-a}\right].
 \label{eq:geometric-chi-two}
\end{align}
If this coefficient vanishes, then
\begin{equation}
 z^{d-a}=\frac{\rho^{2d}-\overline\alpha^{\,d}}
 {1-\overline\alpha^{\,d}}.
 \label{eq:geometric-zero-equation}
\end{equation}
The modulus of the right-hand side is exactly \(\rho^d\), since
\begin{equation}
 |\rho^{2d}-\overline\alpha^{\,d}|
 =\rho^d|1-\alpha^d|
 =\rho^d|1-\overline\alpha^{\,d}|.
 \label{eq:geometric-modulus}
\end{equation}
The left-hand side has modulus \(\rho^{2(d-a)}\).  Because \(0<\rho<1\), equality forces \(a=d/2\), and hence \(d\) is even.  With \(h=d/2\), Eq.~\eqref{eq:geometric-chi-one} factors as
\begin{equation}
 \chi_\alpha(h,b)
 =c_d^2\frac{1-z^h}{1-z}\overline\alpha^{-h}
 (\overline\alpha^{\,d}+z^h),
 \qquad z^h=\rho^d(-1)^b.
 \label{eq:geometric-factor}
\end{equation}
The first factor cannot vanish because \(|z^h|=\rho^d<1\).  The last factor vanishes exactly when Eq.~\eqref{eq:geometric-criterion} fails, namely \(d\theta\in\pi\Z\), and Eq.~\eqref{eq:geometric-parity} holds.  One parity class contains exactly \(d/2\) labels.

For the stability bound, use the two nonwrapped sums directly.  For any \(a\),
\begin{equation}
 |\chi_\alpha(a,b)|
 \le\frac{\rho^a(1-\rho^{2(d-a)})
 +\rho^{d-a}(1-\rho^{2a})}{1-\rho^{2d}}.
 \label{eq:geometric-triangle}
\end{equation}
Choosing \(a=\lfloor d/2\rfloor\) bounds this by
\(
2\rho^{\lfloor d/2\rfloor}/(1-\rho^{2d}).
\)
Squaring and using Eq.~\eqref{eq:intro-eta} proves Eq.~\eqref{eq:geometric-bound}.

The convention of Ref.~\cite{DAriano2004} is
\begin{equation}
 U_{m,n}=\sum_k\omega^{km}\ket k\!\bra{k+n}
 =\omega^{-mn}X^{-n}Z^m,
 \label{eq:DPS-convention}
\end{equation}
so it differs from ours only by the symplectic relabeling \((a,b)=(-n,m)\) and a global phase.  The conventions of Refs.~\cite{Singal2022,Farkas2026} are likewise related by a phase-space relabeling.  Ranks, zero multiplicities, and the minimum spectrum are invariant under these changes.

As a low-dimensional check, \(d=4\) and \(\alpha=(1+i)/2\) have \(a=2\) and even \(b\) as zeros, giving two zero Gram eigenvalues exactly as Proposition~\ref{prop:geometric} predicts.

\section{Finite-field labels and complete Alltop spectrum}
\label{app:finite-field}

Let the symplectic form on \(\F_q^2\) be
\begin{equation}
 [(a,b),(c,e)]=bc-ae.
 \label{eq:symplectic-form}
\end{equation}
Conjugation obeys
\begin{equation}
 D_{a,b}D_{c,e}D_{a,b}^\dagger
 =\psi([(a,b),(c,e)])D_{c,e}.
 \label{eq:field-conjugation}
\end{equation}
For \(v\in\F_q^2\), define the normalized phase-space character
\begin{equation}
 f_v(u)=q^{-1}\psi([u,v]).
 \label{eq:field-Fourier-vector}
\end{equation}
Repeating the synthesis proof in Appendix~\ref{app:spectrum} gives
\begin{equation}
 G_\phi f_v=q|\chi_\phi(v)|^2f_v,
 \label{eq:field-labelled-spectrum}
\end{equation}
up to the harmless sign choice in Eq.~\eqref{eq:symplectic-form}.

The labelled identity above holds for every prime power.  For the full
perturbed-Alltop spectrum, now assume \(q=p^r\) with \(p\ge5\), and write
\begin{equation}
 \ket{\phi_{q,t}}=
 \frac{\ket{A_q}+t\ket0}{\sqrt{N_{q,t}}},
\end{equation}
and introduce the quadratic character \(\vartheta\) of \(\F_q^\times\) and
\begin{equation}
 \gamma_q=q^{-1/2}\sum_{x\in\F_q}\psi(x^2),
 \qquad |\gamma_q|=1.
 \label{eq:gamma}
\end{equation}
The standard quadratic Gauss-sum identity \cite[Chap.~5]{LidlNiederreiter1997}
\begin{equation}
 \sum_x\psi(Ax^2+Bx)
 =\gamma_q\sqrt q\,\vartheta(A)
 \psi\!\left(-\frac{B^2}{4A}\right),
 \qquad A\ne0,
 \label{eq:quadratic-Gauss}
\end{equation}
follows by completing the square and evaluating the remaining quadratic Gauss sum.  For \(a\ne0\), substitution into Eq.~\eqref{eq:Alltop-Gauss} and inclusion of the spike terms yields
\begin{align}
 q|\chi_{\phi_{q,t}}(a,b)|^2
 =\frac1{N_{q,t}^2}\Bigg|&\gamma_q\vartheta(-3a)
 \psi\!\left(\frac{b^2}{12a}-\frac{ab}{2}-\frac{a^3}{4}\right)
 \notag\\[-1mm]
 &+t\psi(-a^3)[1+\psi(-ab)]\Bigg|^2.
 \label{eq:complete-off-axis}
\end{align}
For \(a=0,b\ne0\),
\begin{equation}
 q|\chi_{\phi_{q,t}}(0,b)|^2
 =\frac{(\sqrt q\,t^2+2t)^2}{N_{q,t}^2},
 \label{eq:complete-axis}
\end{equation}
and the identity eigenvalue is \(q\).  Equations~\eqref{eq:complete-off-axis} and \eqref{eq:complete-axis} are a closed description of the full projector-Gram spectrum for every extension field of characteristic at least five.

\section{Numerical protocol}
\label{app:numerics}

The numerical data were regenerated with Python 3.14.2, NumPy 2.5.2, SciPy 1.18.0, and Matplotlib 3.11.1.  A deterministic workflow regenerates both benchmark data sets and the two main figures; the principal parameters are listed in Table~\ref{tab:numerics}.

\begin{table}[t]
\caption{Reproducibility parameters for Figs.~\ref{fig:story} and \ref{fig:operational-MSE}.}
\label{tab:numerics}
\begin{ruledtabular}
\begin{tabular}{lc}
Quantity & Value\\
\hline
Haar seed & 20260810\\
Fig.~\ref{fig:story} Haar samples per \(d\) & 2000\\
Fig.~\ref{fig:story} Haar dimensions & \(2\)--\(50\)\\
Fig.~\ref{fig:operational-MSE} Haar samples per \(q\) & 4000\\
Operational prime dimensions & \(5,7,11,13,17,19\)\\
Reported quantiles & 0.10, 0.50, 0.90\\
Alltop prime powers & \(q\le10^4\), \(p\ge5\)\\
BIC representative radius & \(\rho=1/3\)\\
BIC phase \(t_d\) & \(0\) (odd), \(1/(4d)\) (even)
\end{tabular}
\end{ruledtabular}
\end{table}

The plotted geometric representative uses \(\alpha=(1/3)e^{2\pi i t_d}\) with the phase schedule in Table~\ref{tab:numerics}; its spectrum is checked for strict positivity.  The displayed upper envelope is the supremum \(\rho\uparrow1/2\) in Eq.~\eqref{eq:geometric-bound} and bounds only that explicit sufficient region.  For the operational benchmark, a direct \(d=3\) frame-channel calculation verifies Eq.~\eqref{eq:linear-MSE}; all plotted structured fiducials are checked for the Moyal sum, strict invertibility, and the Alltop minimum.  Deterministic \(\F_{25}\) and \(\F_{49}\) tests verify the extension-field Moyal identity, attained spectral interval, and, for \(\F_{25}\), the complete Gauss phase.

\end{document}